\newcommand\setItemnumber[1]{\setcounter{enum\romannumeral\@enumdepth}{\numexpr#1-1\relax}}
\tikzstyle{medium rectangle}=[fill=white, draw=black, shape=rectangle, minimum width=0.75 cm, minimum height=1 cm]
\tikzstyle{red}=[fill=red, draw=black, shape=circle]
\tikzstyle{system_label}=[fill=none, draw=none, shape=circle]
\tikzstyle{small square}=[fill=white, draw=black, shape=rectangle]
\tikzstyle{big rectangle}=[fill=white, draw=black, shape=rectangle, minimum width=1.2 cm, minimum height=1.5 cm]
\tikzstyle{tall rectangle}=[fill=white, draw=black, shape=rectangle, minimum width=1.2 cm, minimum height=2.1 cm]
\tikzstyle{medium square}=[fill=white, draw=black, shape=rectangle, minimum width=0.6 cm, minimum height=0.6 cm]
\tikzstyle{huge rectangle}=[fill=white, draw=black, shape=rectangle, minimum width=2 cm, minimum height=4.5 cm]
\tikzstyle{yellow small}=[fill={rgb,255: red,255; green,252; blue,144}, draw=black, shape=rectangle]
\tikzstyle{violet node}=[fill={rgb,255: red,195; green,187; blue,255}, draw=black, shape=rectangle, minimum width=1.2 cm, minimum height=1.5 cm]
\tikzstyle{pink node}=[fill={rgb,255: red,238; green,174; blue,255}, draw=black, shape=rectangle, minimum width=1.2 cm, minimum height=1.5 cm]
\tikzstyle{pompelmo node}=[fill={rgb,255: red,255; green,187; blue,166}, draw=black, shape=rectangle, minimum width=1.2 cm, minimum height=1.5 cm]
\tikzstyle{ottano node}=[fill={rgb,255: red,131; green,201; blue,187}, draw=black, shape=rectangle]
\tikzstyle{blue}=[fill={rgb,255: red,214; green,201; blue,255}, draw=black, shape=rectangle]
\tikzstyle{pure}=[fill=blue, draw=blue, shape=circle, minimum size=8pt, inner sep=0pt, outer sep=0pt]
\tikzstyle{extremal}=[fill={rgb,255: red,0; green,233; blue,0}, draw={rgb,255: red,0; green,233; blue,0}, shape=circle, minimum size=8pt, inner sep=0pt, outer sep=0pt]
\tikzstyle{atomic - extremal}=[fill={rgb,255: red,0; green,167; blue,0}, draw={rgb,255: red,0; green,167; blue,0}, shape=circle, minimum size=8pt, inner sep=0pt, outer sep=0pt]
\tikzstyle{lambda}=[-, draw={rgb,255: red,191; green,191; blue,191}]
\tikzstyle{state}=[<-]
\tikzstyle{bluette}=[-, fill={rgb,255: red,214; green,201; blue,255}, draw={rgb,255: red,192; green,181; blue,229}]
\tikzstyle{greenish}=[-, fill={rgb,255: red,160; green,217; blue,255}, draw={rgb,255: red,137; green,188; blue,219}]
\tikzstyle{white}=[-, fill=white]
\tikzstyle{reddish}=[-, fill={rgb,255: red,255; green,143; blue,145}, dashed]
\tikzstyle{yellowish}=[-, fill={rgb,255: red,255; green,252; blue,144}, dashed, line width=1 pt]
\tikzstyle{red non-dashed}=[-, fill={rgb,255: red,255; green,143; blue,145}]
\tikzstyle{violet}=[-, dashed, fill={rgb,255: red,195; green,187; blue,255}]
\tikzstyle{green non-dashed}=[-, fill={rgb,255: red,160; green,217; blue,255}]
\tikzstyle{pompelmo non-dashed}=[-, fill={rgb,255: red,255; green,187; blue,166}]
\tikzstyle{pink}=[-, fill={rgb,255: red,238; green,174; blue,255}, dashed]
\tikzstyle{blue non-dashed}=[-, fill={rgb,255: red,178; green,255; blue,246}]
\tikzstyle{yell non-dashed}=[-, fill={rgb,255: red,255; green,248; blue,137}]
\tikzstyle{ottano}=[-, fill={rgb,255: red,164; green,184; blue,255}, draw={rgb,255: red,147; green,168; blue,229}]
\tikzstyle{pompelmo}=[-, fill={rgb,255: red,255; green,187; blue,166}, draw={rgb,255: red,213; green,156; blue,139}]
\tikzstyle{dark blue}=[-, fill={rgb,255: red,131; green,201; blue,187}, draw={rgb,255: red,119; green,183; blue,170}]
\tikzstyle{d.blue non-dashed}=[-, fill={rgb,255: red,237; green,148; blue,112}]
\tikzstyle{dashed edge}=[-, dashed]
\tikzstyle{ottano non-drawn}=[-, fill={rgb,255: red,164; green,184; blue,255}]
\tikzstyle{atomic}=[-, fill={rgb,255: red,232; green,232; blue,232}, draw={rgb,255: red,255; green,128; blue,0}, line width=1.5pt]
\tikzstyle{atomic - dashed}=[-, dashed, fill={rgb,255: red,232; green,232; blue,232}, draw={rgb,255: red,255; green,128; blue,0}, line width=0.2pt]
\tikzstyle{det}=[-, fill={rgb,255: red,232; green,232; blue,232}, draw={rgb,255: red,227; green,0; blue,3}, line width=1.5pt]
\tikzstyle{det - dashed}=[-, dashed, fill={rgb,255: red,232; green,232; blue,232}, draw={rgb,255: red,227; green,0; blue,3}, line width=0.2pt]
\tikzstyle{standard}=[-, fill={rgb,255: red,232; green,232; blue,232}, line width=1.5pt]
\tikzstyle{empty}=[-, fill={rgb,255: red,232; green,232; blue,232}, draw={rgb,255: red,232; green,232; blue,232}, line width=1.5pt]
\tikzstyle{empty- dashed}=[-, dashed, fill={rgb,255: red,232; green,232; blue,232}, draw={rgb,255: red,232; green,232; blue,232}, line width=1.5pt]
\tikzstyle{arrow}=[draw={rgb,255: red,255; green,128; blue,0}, ->, line width=2pt]
\newcommand*{\hyperlinkcite}[1]{\hyper@link{cite}{cite.#1}}
\pretocmd{\NAT@citexnum}{\@ifnum{\NAT@ctype>\z@}{\let\NAT@hyper@\relax}{}}{}{}
\newcommand{\qw}[1][-1]{\ar @{-} [0,#1]}
\newcommand{\gate}[1]{*{\xy *+<.6em>{#1};p\save+LU;+RU **\dir{-}\restore\save+RU;+RD **\dir{-}\restore\save+RD;+LD **\dir{-}\restore\POS+LD;+LU **\dir{-}\endxy} \qw}
\newcommand{\measureD}[1]{*{\xy*+=+<.5em>{\vphantom{\rule{0em}{.1em}#1}}*\cir{r_l};p\save*!R{#1} \restore\save+UC;+UC-<.5em,0em>*!R{\hphantom{#1}}+L **\dir{-} \restore\save+DC;+DC-<.5em,0em>*!R{\hphantom{#1}}+L **\dir{-} \restore\POS+UC-<.5em,0em>*!R{\hphantom{#1}}+L;+DC-<.5em,0em>*!R{\hphantom{#1}}+L **\dir{-} \endxy} \qw}
\newcommand{\multimeasureD}[2]{*+<1em,.9em>{\hphantom{#2}}\save[0,0].[#1,0];p\save !C *{#2},p+LU+<0em,0em>;+RU+<-.8em,0em> **\dir{-}\restore\save +LD;+LU **\dir{-}\restore\save +LD;+RD-<.8em,0em> **\dir{-} \restore\save +RD+<0em,.8em>;+RU-<0em,.8em> **\dir{-} \restore \POS !UR*!UR{\cir<.9em>{r_d}};!DR*!DR{\cir<.9em>{d_l}}\restore \qw}
\newcommand{\multigate}[2]{*+<1em,.9em>{\hphantom{#2}} \qw \POS[0,0].[#1,0];p !C *{#2},p \save+LU;+RU **\dir{-}\restore\save+RU;+RD **\dir{-}\restore\save+RD;+LD **\dir{-}\restore\save+LD;+LU **\dir{-}\restore}
\newcommand{\ghost}[1]{*+<1em,.9em>{\hphantom{#1}} \qw}
\newcommand{\ustick}[1]{*!D!<0em,-.5em>=<0em>{#1}}
\newcommand{\Qcircuit}[1][0em]{\xymatrix @*=<#1>}
\newcommand{\pureghost}[1]{*+<1em,.9em>{\hphantom{#1}}}
\newcommand{\multiprepareC}[2]{*+<1em,.9em>{\hphantom{#2}}\save[0,0].[#1,0];p\save !C
  *{#2},p+RU+<0em,0em>;+LU+<+.8em,0em> **\dir{-}\restore\save +RD;+RU **\dir{-}\restore\save
  +RD;+LD+<.8em,0em> **\dir{-} \restore\save +LD+<0em,.8em>;+LU-<0em,.8em> **\dir{-} \restore \POS
  !UL*!UL{\cir<.9em>{u_r}};!DL*!DL{\cir<.9em>{l_u}}\restore}
\newcommand{\prepareC}[1]{*{\xy*+=+<.5em>{\vphantom{#1\rule{0em}{.1em}}}*\cir{l^r};p\save*!L{#1} \restore\save+UC;+UC+<.5em,0em>*!L{\hphantom{#1}}+R **\dir{-} \restore\save+DC;+DC+<.5em,0em>*!L{\hphantom{#1}}+R **\dir{-} \restore\POS+UC+<.5em,0em>*!L{\hphantom{#1}}+R;+DC+<.5em,0em>*!L{\hphantom{#1}}+R **\dir{-} \endxy}}
\newcommand{\myQcircuit}[1]{\begin{aligned} \Qcircuit @C=0.8em @R=0.8em {#1} \end{aligned} }
\theoremstyle{definition}
\newtheorem*{definition*}{Definition}
\newtheorem{definition}{Definition}
\newcommand{\textdef}[1]{\textit{{#1}}}
\theoremstyle{plain}
\newtheorem*{property*}{Property}
\newtheorem{theorem}{Theorem}
\newtheorem*{theorem*}{Theorem}
\newtheorem{corollary}{Corollary}
\newtheorem*{corollary*}{Corollary}
\newtheorem*{proposition*}{Proposition}
\newtheorem*{conjecture*}{Conjecture}
\newtheorem*{question*}{Question}
\newtheorem*{problem*}{Problem}
\newtheorem*{lemma*}{Lemma}
\newtheorem{lemma}{Lemma}
\newtheorem*{example*}{Example}
\newtheorem*{remark*}{Remark}
\newtheorem{remark}{Remark}
\newtheorem*{proof*}{Proof}
\newtheorem*{rep@theorem}{\rep@title}
\newcommand{\newreptheorem}[2]{%
	\newenvironment{rep#1}[1]{%
		\def\rep@title{\autoref{##1}}%
		\begin{rep@theorem}}%
		{\end{rep@theorem}}}
\DeclareSymbolFont{sfletters}{OML}{cmbrm}{m}{it}
\DeclareMathSymbol{\salpha}{\mathord}{sfletters}{"0B}
\DeclareMathSymbol{\sbeta}{\mathord}{sfletters}{"0C}
\DeclareMathSymbol{\sgamma}{\mathord}{sfletters}{"0D}
\DeclareMathSymbol{\sdelta}{\mathord}{sfletters}{"0E}
\DeclareMathSymbol{\sepsilon}{\mathord}{sfletters}{"0F}
\DeclareMathSymbol{\szeta}{\mathord}{sfletters}{"10}
\DeclareMathSymbol{\seta}{\mathord}{sfletters}{"11}
\DeclareMathSymbol{\stheta}{\mathord}{sfletters}{"12}
\DeclareMathSymbol{\siota}{\mathord}{sfletters}{"13}
\DeclareMathSymbol{\skappa}{\mathord}{sfletters}{"14}
\DeclareMathSymbol{\slambda}{\mathord}{sfletters}{"15}
\DeclareMathSymbol{\smu}{\mathord}{sfletters}{"16}
\DeclareMathSymbol{\snu}{\mathord}{sfletters}{"17}
\DeclareMathSymbol{\sxi}{\mathord}{sfletters}{"18}
\DeclareMathSymbol{\spi}{\mathord}{sfletters}{"19}
\DeclareMathSymbol{\srho}{\mathord}{sfletters}{"1A}
\DeclareMathSymbol{\ssigma}{\mathord}{sfletters}{"1B}
\DeclareMathSymbol{\stau}{\mathord}{sfletters}{"1C}
\DeclareMathSymbol{\supsilon}{\mathord}{sfletters}{"1D}
\DeclareMathSymbol{\sphi}{\mathord}{sfletters}{"1E}
\DeclareMathSymbol{\schi}{\mathord}{sfletters}{"1F}
\DeclareMathSymbol{\spsi}{\mathord}{sfletters}{"20}
\DeclareMathSymbol{\somega}{\mathord}{sfletters}{"21}
\DeclareMathSymbol{\svarepsilon}{\mathord}{sfletters}{"22}
\DeclareMathSymbol{\svartheta}{\mathord}{sfletters}{"23}
\DeclareMathSymbol{\svarpi}{\mathord}{sfletters}{"24}
\DeclareMathSymbol{\svarrho}{\mathord}{sfletters}{"25}
\DeclareMathSymbol{\svarsigma}{\mathord}{sfletters}{"26}
\DeclareMathSymbol{\svarphi}{\mathord}{sfletters}{"27}
\def\<{\langle}\def\>{\rangle}
\newcommand{\mathDef}{\coloneq}
\newcommand{\kronekerDelta}[2]{\delta_{ #1 , #2}}
\newcommand*{\hilbert}{\mathcal{H}}
\newcommand*{\cardinality}[1]{\ensuremath{\vert #1 \vert}}
\newcommand{\cartesianC}{\times}
\newcommand{\cartesianProduct}[2]{\ensuremath{#1 \cartesianC #2}}
\newcommand{\adjunt}[1]{\ensuremath{ #1^{\dagger} }}
\newcommand*{\Tr}[1]{\ensuremath{  \mathsf{Tr}[ #1 ]} }
\newcommand*{\OPTMath}{\ensuremath{\Theta}}
\newcommand{\rbra}[1]{({#1}\vert}
\newcommand{\rket}[1]{\vert{#1})}
\newcommand{\rbraketSystem}[3]{{\left( #1 \vphantom{#2} \right|\left. #2 \vphantom{#1} \right)}_{\system{#3}}}
\newcommand{\bra}[1]{\left\<{#1}\right\vert}
\newcommand{\ket}[1]{\left\vert{#1}\right\>}
\newcommand{\ketbra}[2]{\ket{#1}\bra{#2}}
\newcommand{\uniDetEff}{e}
\newcommand{\system}[1]{\ensuremath{\mathrm{#1}}}
\newcommand{\trivialSystem}{\ensuremath{\system{I}}}
\newcommand{\s}[1]{\ustick{\scriptstyle{\system{#1}}}}
\newcommand*{\Sys}[1]{\ensuremath{\mathsf{Sys\left(\mathrm{#1}\right)}}}
\newcommand*{\@sysDimensionD}{\mathsf{D}}
\newcommand*{\sysDimension}[1]{\ensuremath{\@sysDimensionD_{\system{#1}}}}
\newcommand*\bigcdot{\mathpalette\bigcdot@{.5}}
\newcommand*\bigcdot@[2]{\mathbin{\vcenter{\hbox{\scalebox{#2}{$\m@th#1\bullet$}}}}}
\newcommand*{\outcomeSpace}[1]{\ensuremath{\mathsf{#1}}}
\newcommand*{\outcome}[1]{\ensuremath{#1}}
\newcommand*{\outcomeDouble}[2]{\left(\outcome{#1}, \outcome{#2} \right)}
\newcommand*{\outcomeIncluded}[2]{\ensuremath{ \outcome{#1} \in \outcomeSpace{#2} }}
\newcommand*{\outcomeSpaceDouble}[2]{\ensuremath{\cartesianProduct{\outcomeSpace{#1}}{\outcomeSpace{#2}}}}
\newcommand*{\outcomeIncludedDouble}[4]{\ensuremath{ \outcomeDouble{#1}{#2} \in \outcomeSpaceDouble{#3}{#4} }}
\newcommand*{\outcomeSpaceConditioned}[2]{\ensuremath{\outcomeSpace{#1}^{\left( \outcome{#2} \right)}}}
\newcommand*{\outcomeIncludedConditioned}[3]{\ensuremath{\outcome{#1} \in \outcomeSpaceConditioned{#2}{#3}}}
\newcommand{\testNoDown}[1]{\ensuremath{\mathsf{#1}}}
\newcommand{\test}[2]{\ensuremath{\testNoDown{#1}_{\outcomeSpace{#2}}}}
\newcommand{\conditionedTest}[3]{\ensuremath{\testNoDown{#1}^{\left(\outcome{#3}\right)}_{\outcomeSpace{#2}}}}
\newcommand{\preparationTestNoDown}[1]{{\ensuremath{#1}}}
\newcommand{\eventNoDown}[1]{\ensuremath{\mathscr{#1}}}
\newcommand{\event}[2]{\ensuremath{\eventNoDown{#1}_{\outcome{#2}}}}
\newcommand{\eventCG}[2]{\ensuremath{\eventNoDown{#1}_{\outcomeSpace{#2}}}}
\newcommand{\preparationEventNoDown}[1]{\ensuremath{#1}}
\newcommand{\observationEventNoDown}[1]{\ensuremath{\text{#1}}}
\newcommand{\observationEvent}[2]{\ensuremath{\observationEventNoDown{#1}_{\outcome{#2}}}}
\newcommand{\observationUniqueDeterministic}{\observationEventNoDown{\uniDetEff}}
\newcommand{\probabilityEventNoDown}[1]{\ensuremath{ {#1} }}
\newcommand{\probabilityP}{\probabilityEventNoDown{p}}
\newcommand{\conditionedEvent}[3]{\ensuremath{\eventNoDown{#1}^{\left(\outcome{#3}\right)}_{\outcome{#2}}}}
\newcommand{\conditionedEventTest}[4]{\{\eventNoDown{#1}^{\left(\outcome{#4}\right)}_{\outcome{#2}}\}_{\outcomeIncluded{#2}{#3}}}
\newcommand{\preparationEventTest}[3]{\ensuremath{\{\preparationEventNoDown{#1}_{\outcome{#2}} \}_{\outcomeIncluded{#2}{#3}}}}
\newcommand{\observationEventTest}[3]{\ensuremath{\{\observationEventNoDown{#1}_{\outcome{#2}} \}_{\outcomeIncluded{#2}{#3}}}}
\newcommand*{\eventNoDownWithIdentityParentesys}[2]{\ensuremath{ \left( \eventNoDownWithIdentityParentesys{#1}{#2} \right) }}
\newcommand{\nullTransformationSymbol}{\varepsilon}
\newcommand{\nullTransformation}[2]{\nullTransformationSymbol_{\system{#1} \!\to\! \system{#2}}}
\newcommand{\seqC}{\circ}
\newcommand{\sequentialComp}[2]{{#1} \seqC {#2}}
\newcommand{\sequentialEventTest}[6]{\{ \sequentialComp{\mathcal{#4}_{\outcome{#5}}}{\mathcal{#1}_{\outcome{#2}}} \}_{\outcomeIncludedDouble{#2}{#5}{#3}{#6}}}
\newcommand*{\setNotEnsemble}[1]{ \ensuremath{\left\llbracket #1 \right\rrbracket} }
\newcommand{\eventTestUnformatted}[3]{\ensuremath{\left\{ #1  \right\} _{\outcomeIncluded{#2}{#3}}}}
\newcommand{\eventTest}[3]{\ensuremath{ \eventTestUnformatted{ \eventNoDown{#1}_{\outcome{#2}} }{#2}{#3} }}
\newcommand*{\eventTestEnsemble}[1]{\ensuremath{ \left\llbracket #1 \right\rrbracket }}
\newcommand{\paralC}{ \boxtimes }
\newcommand{\parallelComp}[2]{{#1} \paralC {#2}}
\newcommand{\parallelEventTest}[6]{\{ \parallelComp{\mathcal{#1}_{\outcome{#2}}}{\mathcal{#4}_{\outcome{#5}}} \}_{\outcomeIncludedDouble{#2}{#5}{#3}{#6}}}
\newcommand{\St}[1]{\ensuremath{\mathsf{St\!\left(\system{#1}\right)}}}	
\newcommand{\StN}[1]{\ensuremath{\mathsf{St_{1}\!\left(\system{#1}\right)}}}				
\newcommand{\verSt}[1]{\ensuremath{\Upsilon\!\left(#1\right)}}
\newcommand{\Eff}[1]{\ensuremath{\mathsf{Eff\!\left(\system{#1}\right)}}}					    
\newcommand{\EffN}[1]{\ensuremath{\mathsf{Eff_{1}\!\left(\system{#1}\right)}}}				
\newcommand{\Transf}[2]{\ensuremath{\mathsf{Transf\!\left(\system{#1}\!\to\!\system{#2}\right)}}}
\newcommand{\Instr}[2]{\ensuremath{\mathsf{Instr\!\left(\system{#1}\!\to\!\system{#2}\right)}}}
\newcommand{\InstrA}[1]{\ensuremath{\mathsf{Instr\left(\system{#1}\right)}}}
\newcommand{\InstrOPT}[1]{\ensuremath{\mathsf{Instr\left({#1}\right)}}}
\newcommand{\Prep}[1]{\ensuremath{\mathsf{Prep\!\left(\system{#1}\right)}}}
\newcommand{\Obs}[1]{\ensuremath{\mathsf{Obs\!\left(\system{#1}\right)}}}
\newcommand{\doNotExcludeOp}{\ensuremath{\rightarrow}}
\newcommand{\doNotExclude}[2]{\ensuremath{#1 \doNotExcludeOp #2}}
\newcommand*{\entropy}[2][]{\ensuremath{H_{#1}\left(  #2 \right)}}
\newcommand{\letter}{Letter}
\newcommand{\supMat}{Supplementary Material}
\newcommand{\secRef}[1]{\hyperref[#1]{section~\ref*{#1}}}
\newcommand{\aref}[1]{\hyperref[#1]{Appendix~\ref*{#1}}}
\newcommand{\old}[1]{{\color{red}#1}}
\newcommand{\new}[1]{{\color{black!60!green}#1}}
\newcommand{\note}[1]{{\color{violet} DR - NOTE: #1}}
\begin{document}
\title{Quantum complementarity}

\author{Davide Rolino}
\email{davide.rolino01@universitadipavia.it}
\affiliation{Universit\`a degli Studi di Pavia, Dipartimento di Fisica, QUIT Group}
\affiliation{INFN Gruppo IV, Sezione di Pavia, via Bassi 6, 27100 Pavia, Italia}

\author{Paolo Perinotti}
\email{paolo.perinotti@unipv.it}
\affiliation{Universit\`a degli Studi di Pavia, Dipartimento di Fisica, QUIT Group}
\affiliation{INFN Gruppo IV, Sezione di Pavia, via Bassi 6, 27100 Pavia, Italia}

\author{Alessandro Tosini}
\email{alessandro.tosini@unipv.it}
\affiliation{Universit\`a degli Studi di Pavia, Dipartimento di Fisica, QUIT Group}
\affiliation{INFN Gruppo IV, Sezione di Pavia, via Bassi 6, 27100 Pavia, Italia}

\begin{abstract}
	We propose an operational definition of complementarity, pinning down the concept originally introduced by Bohr. Two properties of a system are considered complementary if they cannot be simultaneously well defined. We further show that, within quantum theory, this notion is equivalent to the incompatibility of operations---that is, their inability to be performed simultaneously.
\end{abstract}

\maketitle

In the early days of quantum mechanics, some counterintuitive features of the theory
were not yet thoroughly formalised, and physicist struggled to grasp them in the  
classical language. A paradigmatic example is that of Bohr's concept of \emph{complementarity}~\cite{bohrQuantumPostulateRecent1928,bohrCanQuantumMechanicalDescription1935,pauliGeneralPrinciplesQuantum1980,Saunders:2005aa}. A prominent manifestation of complementarity is the so-called wave-particle duality---a simplified phrasing for the observation that neither the corpuscular nor the wave-like behaviour of a quantum particle fully capture its physics. In the modern language of quantum field theory, a thorough mathematical formalism allows one to encompass both of these complementary aspects of microscopic systems, however, the interpretative challenges are only shifted to a different level. 

Following Bohr's original idea, an extensive line of research has developed around wave-particle 
duality~\cite{englertFringeVisibilityWhichWay1996,scullyQuantumOpticalTests1991,jaegerTwoInterferometricComplementarities1995,durrOriginQuantummechanicalComplementarity1998,bimonteInterferometricDualityMultibeam2003,martinez-linaresQualityWhichwayDetector2004,schillingPhasedependentWhichwayInformation2012,dezelaCommentFringeVisibility2013,prabhutejQuantumWhichwayInformation2014,krausComplementaryObservablesUncertainty1987,buschOperationalQuantumPhysics1995,buschComplementarityQuantumObservables1995,buschHeisenbergsUncertaintyPrinciple2007,petzComplementarityQuantumSystems2007,plotnitskyWhatComplementarityNiels2014,sahaOperationalFoundationsComplementarity2020,hsiehQuantumComplementarityNovel2023,serinoComplementaritybasedComplementarity2024}. Traditionally, complementarity is a notion associated to \emph{physical quantities} that can be measured on a quantum system~\cite{krausComplementaryObservablesUncertainty1987,buschOperationalQuantumPhysics1995,buschComplementarityQuantumObservables1995,buschHeisenbergsUncertaintyPrinciple2007,petzComplementarityQuantumSystems2007,plotnitskyWhatComplementarityNiels2014,sahaOperationalFoundationsComplementarity2020,hsiehQuantumComplementarityNovel2023,serinoComplementaritybasedComplementarity2024}. 

However, in the context of quantum information, the physical interpretation of measurement outcomes plays a secondary role, and a notion such as complementarity must be defined in a more abstract manner.
In this \letter, we address this question by defining complementarity in terms of the impossibility of simultaneously assigning truth values to statements concerning the outcomes of particular measurements, which we deem \emph{properties} of a physical system. These properties are characterised not only by their statistical features but also by their dynamical effects. We then show that complementarity is equivalent to the incompatibility of the instruments associated with the measurement of such properties. Moreover, we demonstrate that complementary measurements are incompatible also in general theories of information processing~\cite{hardyDisentanglingNonlocalityTeleportation1999,barnumCloningBroadcastingGeneric2006,barrettInformationProcessingGeneralized2007,spekkensEvidenceEpistemicView2007,chiribellaProbabilisticTheoriesPurification2010,janottaGeneralizedProbabilisticTheories2013,darianoQuantumTheoryFirst2016,plavalaGeneralProbabilisticTheories2021}.

Following the literature, the notion of complementarity regards \emph{properties} of a physical system. Since a property is some predicate of a system that has to exhibit some degree of objectivity---or better, intersubjectivity---in its very definition we will not only require that its values might have truth values (corresponding to probability 0 or 1) but also that a repeated measurement would not provide contradictory assertions. Actually, the former feature can be derived as a consequence of the latter. In this way, the dynamical aspects become important in the distinction between any measurement and the measurement of a property. Once we have a satisfactory definition at hand, we introduce the notion of \textdef{complementary} properties. This notion captures the idea that there is a trade-off between the truth values of statements about the value of different properties: if one has definite truth values, those of the other need to be undefined.

More formally, let $\system A$ be a quantum system with Hilbert space $\hilbert{} $, and let $\test{T}{X}=\left\{ \event{T}{x} \right\}_{\outcomeIncluded{x}{X}} $ be a \emph{quantum instrument} for $\system A$, that is a collection of \emph{quantum operations} $\event{T}{x}$, i.e.~\acf{CPTNI} maps, such that their sum is a \emph{channel}, i.e.~a \acf{CPTP} map.
We will also represent instruments and quantum operations graphically via wired boxes
\begin{align*}
	\myQcircuit{
		&\s{A}\qw&\gate{\left\{ \event{T}{x} \right\}_{\outcomeIncluded{x}{X}}}&\s{B}\qw&\qw&
	}, \quad \myQcircuit{
	&\s{A}\qw&\gate{\event{T}{x}}&\s{B}\qw&\qw&
	},
\end{align*}
where input and output wires depict the input and output systems.

 In particular, we will consider instruments with single-Kraus quantum operations, i.e.
\begin{align}\label{eq:atinst}
	\event{T}{x}\left( \preparationTestNoDown{\rho} \right)= T_{\outcome{x}} \preparationTestNoDown{\rho} \adjunt{T_{\outcome{x}}},\quad\forall \outcomeIncluded{x}{X}
\end{align}
We recall that single Kraus quantum operations $\eventNoDown{T}\left( \cdot \right) = T \cdot \adjunt{T}$ belong to the extremal rays of the cone of \acf{CP} maps. Therefore, they cannot be decomposed as a coarse graining (sum) of other operations. The only allowed decomposition is of the form: $\eventNoDown{T} = \sum_{\outcomeIncluded{x}{X}} \probabilityP_{\outcome{x}} \eventNoDown{T}$, where $\left\{ \probabilityP_{\outcome{x}} \right\}_{\outcomeIncluded{x}{X}}$ is a probability distribution. Following the literature on operational theories, we refer to these transformations as \emph{atomic operations}, and an \emph{atomic instrument} as a collection of atomic operations.

We now define an instrument \emph{repeatable} if applying it multiple times to the same system is equivalent to applying it once. A Stern-Gerlach apparatus is a typical example: repeated applications with the same magnetic field orientation will provide the same outcome. Formally, this requirement is expressed as~\cite{darianoQuantumTheoryFirst2016}:
\begin{align}\label{eq:repeat}
	\event{T}{x} \event{T}{x'} = \kronekerDelta{x}{x'} \event{T}{x} \quad \forall \outcome{x}, \outcome{x'} \in \outcomeSpace{X}.
\end{align}

We now define \emph{elementary properties} as repeatable atomic quantum instruments.
Clearly, thanks to condition~\eqref{eq:repeat}, each event admits a state $\preparationEventNoDown{\rho}^{\left(\outcome{x}\right)}$ that occurs with probability one: $ \Tr{\event{T}{x} \left(\preparationEventNoDown{\rho}^{\left(\outcome{x}\right)}\right)} = 1$, which can be obtained as $\preparationEventNoDown{\rho}^{\left(\outcome{x}\right)}=\event{T}{x} \left(\preparationEventNoDown{\rho}\right)/\Tr{\event{T}{x} \left(\preparationEventNoDown{\rho}\right)}$ for any state $\rho$.
We refer to such a state as a \emph{verifier} for the event $\event{T}{x}$, and more generally, for the property $\test{T}{X}$. We denote the set of verifiers of the operation $\event{T}{x}$ as $\verSt{\event{T}{x}}$ and that of the property $\test{T}{X}$ as $\verSt{\test{T}{X}}$, respectively. We can also introduce the notion of a \textdef{property}, which differs from that of elementary property in that it can be obtained as a coarse graining of an elementary property, where one or more outcomes can be obtained by merging two or more outcomes of the elementary property, e.g.~$\event{T}{x}=\eventNoDown{T}'_{\outcome{y}_{1}} + \eventNoDown{T}'_{\outcome{y}_{2}}$.

In the case of quantum theory, it is possible to characterise instruments that correspond to elementary properties. Consider a repeatable atomic instrument as in Eq.~\eqref{eq:atinst}. From the repeatability condition, it follows that
\begin{align}\label{eq:projectors}
	T_{\outcome{x}}^2 = e^{i \phi} T_{\outcome{x}}\:\Rightarrow \:T_{\outcome{x}} = e^{i \phi} \Pi_{\outcome{x}},
\end{align} 
where $\Pi_{\outcome{x}}$ is the projection on the support of $T_x$, as can be shown
applying the pseudoinverse of $T_{\outcome{x}}$ to both sides of the first equation. 
The global phase factor plays no role, since Kraus operators appear only in expressions of the form $T_{\outcome{x}} \preparationTestNoDown{\rho} \adjunt{T_{\outcome{x}}}$, hence we ignore it. Therefore, for an atomic repeatable instrument we have
\begin{align*}
\event{T}{x}(\rho)=\Pi_x\rho\Pi_x.
\end{align*}
Moreover, given two such operations in the same instrument, by condition~\eqref{eq:repeat} we have ${\Pi}_{x}{\Pi}_{x'}=0$. This implies that elementary properties of quantum theory have the form
\begin{align}
\event{T}{x}(\preparationEventNoDown{\rho})=\Pi_x\preparationEventNoDown{\rho}\Pi_x,\quad\Pi_x\Pi_{x'}=\delta_{xx'}\Pi_x,\quad\sum_{\outcomeIncluded{x}{X}}\Pi_x=I.
\end{align}

Having characterised the elementary properties of quantum theory, we are now in position to define two elementary properties as complementary if there exists a verifier state for one that is not a verifier state for the other. If we prepare such a verifier state, right after the measurement of the corresponding property we have a situation where the verified property is well-defined, while the other is undefined. This cannot arise classically: the pure states of a classical system are identified with the verifiers of the unique elementary property (point in the phase space) which is thus well-defined. 

We remark that in the definition of complementarity we consider elementary properties. This follows from an operational \emph{desideratum}: complementarity must be unrelated to the noise introduced in the measurement by ignoring features through coarse-graining events. Consider, for example, the following two non-elementary properties for a qutrit
\begin{align*}
&  \left\{ \Pi_0\cdot \Pi_0+\Pi_1\cdot \Pi_1,\Pi_2\cdot \Pi_2\right\},\\
&  \left\{ \Pi_+\cdot \Pi_++\Pi_-\cdot \Pi_-,\Pi_2\cdot \Pi_2\right\},
\end{align*}
with $\Pi_i=\ketbra{i}{i}$, and $\{\ket{i}\}_ {i=0,1,2}$ denotes the canonical basis while $\ket{\pm}=(\ket{0}\pm\ket{1})/\sqrt{2}$. It is immediate to see that these two properties have all verifier states in common. However, this is not completely satisfactory: complementarity does not arise only because information from a more refined measurement is ignored. Non-complementarity of this kind is not of interest, given that it arises only due to coarse graining.

We will now introduce three decreasingly strong definitions of complementarity.
Consider two complementary elementary properties $\test{T}{x} \equiv \eventTest{T}{x}{X}$ and $\test{G}{Y} \equiv \eventTest{G}{y}{Y}$. Suppose that $\preparationEventNoDown{\rho}$ is a verifier state for $\test{T}{X}$. The first definition requires that $\Tr{\event{G}{y} \preparationEventNoDown{\rho}} = \frac{1}{\cardinality{\outcomeSpace{Y}}}$ for all $\outcomeIncluded{y}{Y}$, where $\cardinality{\outcomeSpace{Y}}$ is the cardinality of $\outcomeSpace{Y}$. In this case we say that the two properties are \emph{strongly complementary}~\footnote{This definition of complementarity is aligned with the one proposed in Ref.~\cite{krausComplementaryObservablesUncertainty1987}.}, since the property $\test{G}{Y} \equiv \eventTest{G}{y}{Y}$ is maximally undetermined if the physical system is prepared in the state $\rho$. The second definition requires that $\Tr{\event{G}{y} \preparationEventNoDown{\rho}} \in (0,1)$ for all $\outcomeIncluded{y}{Y}$, in which case we deem the properties \emph{mildly complementary}. Here the property $\test{G}{Y} \equiv \eventTest{G}{y}{Y}$ is undetermined but its outcomes can occur with unbalanced probabilities, provided they are not vanishing. In the third definition, the two properties as \emph{weakly complementary} if $\Tr{\event{G}{y} \preparationEventNoDown{\rho}} \in [0,1)$ for all $\outcomeIncluded{y}{Y}$. In this case we allow for events with zero probability to occur, provided that no event is certain: one can have a partial removal of indefiniteness in the second property.

The notion of \textdef{incompatibility}~\cite{darianoIncompatibilityObservablesChannels2022} of measurements and channels has been largely investigated in the literature~\cite{buschComparingDegreesIncompatibility2013,heinosaariSimultaneousMeasurementTwo2016,heinosaariInvitationQuantumIncompatibility2016} and recently adapted to instruments in Refs.~\cite{darianoIncompatibilityObservablesChannels2022,buscemiUnifyingDifferentNotions2023}. In the case of instruments, we are not only interested in determining whether two operations can be performed simultaneously, but if implementing one prevents in some way a subsequent implementation of the other. Indeed one has to recover not only the classical outcomes of both measurements but also the corresponding reduced quantum states through generalised post-processings. We term this notion of incompatibility \textdef{strong incompatibility}, or, equivalently, \textdef{weak compatibility}. The notion of weak compatibility is based on the property of \textdef{exclusion}. Formally, we say that an operation $\test{T}{X} \equiv \eventTest{T}{x}{X}$ does not excludes the operation $\test{G}{Y} \equiv \eventTest{G}{y}{Y}$ if it is possible to implement the first operation in such a way that it is possible to post-process the outcome so that comprehensively the second operation is implemented. In formulae this is expressed by the fact that the following equalities are satisfied \begin{align}
	\label{eqt:doesNotExclude-1}
	&\myQcircuit{
		&\s{A}\qw&\gate{\event{T}{x}}&\s {B}\qw&\qw&
	} = \sum_{\outcomeIncludedConditioned{z}{S}{x}}
	\myQcircuit{
		&\s{A}\qw&\multigate{1}{\event{C}{z}}&\qw&\s {B}\qw&\qw&
		\\
		&\pureghost{}&\pureghost{\event{C}{z}}&\s{E}\qw&\measureD{\observationUniqueDeterministic}
	} \; \forall \outcome{x},\\\label{eqt:doesNotExclude-2}
	&\myQcircuit{
		&\s{A}\qw&\gate{\event{G}{y}}&\s{C}\qw&\qw&
	} = \sum_{\outcomeIncluded{z}{Z}}
	\myQcircuit{
		&\s{A}\qw&\multigate{1}{\event{C}{z}}&\s {B}\qw&\multigate{1}{\conditionedEvent{P}{y}{z}}&\s{C}\qw&\qw&
		\\
		&\pureghost{}&\pureghost{\event{C}{z}}&\s{E}\qw&\ghost{\conditionedEvent{P}{y}{z}}&\pureghost{}&
	} \; \forall \outcome{y}.
\end{align}
for some instrument $\eventTest{C}{z}{Z} \in \Instr{A}{BE}$ and a post-processing, i.e.~a family of instruments $\{\conditionedEventTest{P}{y}{Y}{z}\}_{\outcomeIncluded{z}{Z}} \subset \Instr{BE}{C}$. Then, two operations are weakly compatible if the this relation is satisfied also exchanging the roles of $\eventTest{T}{x}{X}$ and $\eventTest{G}{y}{Y}$, possibly for different instrument and post-processing.
In the above realization scheme an arbitrary ancillary system $\system{E}$ is allowed, which is later discarded via the partial trace (the symbol $e$ at output wires $\system{E}$).

To get a more intuitive idea of the notion of incompatibility, one can think about how the process of measuring the position of a particle changes its momentum, for example, as proposed in Heisenberg's \emph{gedankenexperiment}~\cite{heisenbergUeberAnschaulichenInhalt1927}. Measuring the position alters the momentum of the particle, making it impossible to perform a measurement that simulates a momentum measurement on the particle \emph{before the position measurement}. We highlight that in the case of quantum theory all quantum channels are compatible, i.e.~application of one channel is not incompatible with a subsequent post-processing that simulates another channel, exploiting the fact that all channels have a unitary dilation. Incompatibility, on the other hand, is manifest in quantum instruments~\cite{darianoIncompatibilityObservablesChannels2022}.

We now state our main result: in quantum theory complementarity is equivalent to incompatibility. The proof of this statement, whose argument is sketched in the following, can be found in the \supMat.

We start looking at the implication: compatibility implies non-complementarity, namely if two elementary properties, say $\test{T}{X} \equiv \eventTest{T}{x}{X}$ and $\test{G}{Y} \equiv \eventTest{G}{y}{Y}$, are compatible then they have the same set of verifier states. In particular, the result follows from the fact that if $\test{T}{X}$ does not exclude $\test{G}{Y}$, then $\verSt{\test{G}{Y}} \subseteq \verSt{\test{T}{X}}$. In order to prove this latter relationship is sufficient to show that if $\preparationEventNoDown{\rho}$ is a verifier state for $\event{G}{y}$, then $\Tr{\event{G}{y}\preparationEventNoDown{\rho}} \leq \Tr{\event{T}{x}\preparationEventNoDown{\rho}}$. Since the quantum operations of elementary properties are atomic, the compatibility condition~\eqref{eqt:doesNotExclude-2} gives 
\begin{align*}
	\myQcircuit{
		&\s{A}\qw&\gate{\event{T}{x}}&\s {A}\qw&\qw&
	} \propto 
	\myQcircuit{
		&\s{A}\qw&\multigate{1}{\event{C}{z}}&\qw&\s {A}\qw&\qw&
		\\
		&\pureghost{}&\pureghost{\event{C}{z}}&\s{E}\qw&\measureD{\observationUniqueDeterministic}
	} \quad \forall \outcomeIncludedConditioned{z}{S}{x}.
\end{align*}
From the last equation and the fact that the instrument $\test{T}{x}$ is repeatable, it can be shown that for any $\outcomeIncluded{y}{Y}$ there exists an unique $\outcomeIncluded{x}{X}$ such that 
\begin{align*}
	\event{G}{y} \equiv \event{D}{y,x} = \sum_{\outcomeIncludedConditioned{z}{Z}{x}}
	\myQcircuit{
		&\s{A}\qw&\multigate{1}{\event{C}{z}}&\s {A}\qw&\multigate{1}{\conditionedEvent{P}{y}{z}}&\s{A}\qw&\qw&
		\\
		&\pureghost{}&\pureghost{\event{C}{z}}&\s{E}\qw&\ghost{\conditionedEvent{P}{y}{z}}&\pureghost{}&
	}.
\end{align*}
One can then notice that $ \sum_{\outcomeIncludedConditioned{z}{Z}{x}} \Tr{ \conditionedEvent{P}{y}{z} \event{C}{z} \preparationEventNoDown{\rho}} \leq \sum_{\outcomeIncludedConditioned{z}{Z}{x}} \Tr{ \event{C}{z} \preparationEventNoDown{\rho}} = \Tr{\event{T}{x} \preparationEventNoDown{\rho}}$, where first bound  follows from the fact that $\conditionedEvent{P}{y}{z}$ are generally not deterministic. We have, therefore, recovered the desired inequality, showing that if two properties are weakly compatible, they must also be non-complementary. The above argument actually holds for all causal theories of information processing (see \supMat{}). 

We now prove the converse statement, namely that in quantum theory
non-complementarity implies compatibility~\footnote{\note{Lo rimuoverei} \old{More precisely, the result extends to any theory of information processing that satisfies the properties required for the proof. However, these properties do not define a well-structured class of theories, and due to the specificity of the assumptions involved, we do not explore such generalisations further in this manuscript.}}. The key to prove this result is in the form of quantum elementary properties~\eqref{eq:projectors}, that generally does not hold for a generic information theory. In particular, the proof revolves around the fact that the set of verifier states $\preparationEventNoDown{\rho}^{(\outcome{x})} \in \verSt{\event{T}{x}}$ of any quantum elementary property $\test{T}{X} \equiv \eventTest{T}{x}{X}$ is such that $\event{T}{x} ({\preparationEventNoDown{\rho}^{\left( \outcome{x} \right)}}) = {\preparationEventNoDown{\rho}^{\left( \outcome{x} \right)}}$. In words, all quantum verifiers are fixed points of the property. This follows joining the verifier condition $\Tr{\event{T}{x} (\preparationEventNoDown{\rho}^{\left(\outcome{x}\right)})} = 1$ and the fact that $\event{T}{x}$ is projection, whose support has then to include the state $\rho^{\left( \outcome{x} \right)}$. Notice that the converse is always true: fixed points of the property are verifiers.

Taking for granted the above behaviour of quantum verifiers---which can be checked by direct calculation---we consider two non-complementary properties
$\test{G}{Y} \equiv \eventTest{G}{y}{Y}$ and $\test{T}{X}\equiv \eventTest{T}{x}{X}$ and show that they must 
correspond to the same operation, up to a permutation of their elements. As a consequence they will trivially be compatible. The two instruments share the same verifier set by hypothesis: $\verSt{\test{T}{X}} = \verSt{\test{G}{Y}}$. Furthermore, for any $\outcome{x} \in X$, there must exist $\outcome{y} \in Y$ such that $\verSt{\event{T}{x}} = \verSt{\event{G}{y}}$, otherwise the equality of the global verifier would be violated. Suppose now that two states in $\verSt{\event{T}{x}}$ belong to $\verSt{\event{G}{y}}$ and $\verSt{\event{G}{y'}}$, respectively, with $\outcome{y} \neq \outcome{y'}$. Then, their convex combination would belong to $\verSt{\test{T}{X}}$, but not to $\verSt{\test{G}{Y}}$, which contradicts the hypothesis. By the same argument  the cardinalities of $\outcomeSpace{X}$ and $\outcomeSpace{Y}$ must be the same. Exploiting this and the fact that the Kraus operators are orthogonal projections decomposing identity we conclude that the supports of such projections must be the same for the two properties---their supports being those of joint verifier states. Since in quantum theory an orthogonal projector is uniquely determined by its support, we obtain 
$\Pi^{\testNoDown{P}}_{\outcome{x}} = \Pi^{\testNoDown{Q}}_{\varphi(\outcome{x})}$,
for a bijection $\varphi:\outcomeSpace{X}\to\outcomeSpace{Y}$. 
Hence the two instruments coincide.

Summarizing, we have shown that if two elementary properties in quantum theory are non-complementary, they must be the same property. This trivially implies that the two properties are compatible\new{---}also according to the 
usual text-book definition of compatibility for observables, i.e.~commutativity of their \acfp{PVM}~\cite{buschComparingDegreesIncompatibility2013,heinosaariSimultaneousMeasurementTwo2016,heinosaariInvitationQuantumIncompatibility2016,buscemiUnifyingDifferentNotions2023}.

In this \letter\, we fomalized in an information theoretic setting the notion of complementarity originally introduced by Born. Building on this informational definition---grounded in the concept of properties, understood as ensembles of repeatable atomic operations---we provided a characterization of non-complementary properties in quantum theory. Therefore, it has been proved the equivalence between complementarity and incompatibility of quantum operations. The study of the relationship between the two dates back to the origins of quantum mechanics, in the debate between Bohr and Heisenberg. In the \supMat\ we generalize the present notions from quantum mechanics to an arbitrary information theory and show that the implication incompatibility $\Rightarrow$ complementarity always holds. The opposite implication is instead rooted in the characterization of verifiers as invariant states under the action of an elementary property.

As a future perspective it would be relevant to understand complementarity as a resource and better comprehend how the measure of complementarity here proposed could be related to the measures of incompatibility known in the literature~\cite{buscemiUnifyingDifferentNotions2023}.

It would also be interesting to investigate the connection of complementarity with classicality---in the operational framework by us considered captured by the notion of simpliciality~\cite{darianoClassicalityLocalDiscriminability2020}---in order to assess whether complementarity could provide a suitable notion to distinguish the quantum from the classical world.

D.R.~has worked on this project while visiting Perimeter Institute for Theoretical Physics and would like to thank the quantum foundations group for their hospitality during his visit. This research was supported in part by Perimeter Institute for Theoretical Physics. Research at Perimeter Institute is supported by the Government of Canada through the Department of Innovation, Science, and Economic Development, and by the Province of Ontario through the Ministry of Colleges and Universities. D.R.~acknowledges financial support from the Foundation Blanceflor Boncompagni Ludovisi, née Bildt. A.T.~acknowledges the financial support of Elvia and Federico Faggin Foundation (Silicon Valley Community Foundation Project ID No.~2020-214365). P.P.~acknowledges financial support from European Union - Next Generation EU through the PNNR MUR Project No.~PE0000023-NQSTI. All the authors would like to thank Marco Erba for insightful discussions on the notion of property in the framework of operational probabilistic theories.

\bibliography{complementarity}
\bibliographystyle{apsrev4-2.bst}

\appendix

\section{Operational Probabilistic Theories}
We aim here at giving a brief introduction to the framework of \acfp{OPT}~\cite{chiribellaProbabilisticTheoriesPurification2010,chiribellaInformationalDerivationQuantum2011,darianoQuantumTheoryFirst2016,chiribellaQuantumPrinciples2016,darianoClassicalityLocalDiscriminability2020,perinottiCellularAutomataOperational2020} in order to prove that the result that complementarity implies incompatibility, discussed in the main text, can be generalised also to generic theories of information processing.

The purpose of the framework is to be able to describe and study all theories that have a compositional structure analogous to that of quantum (and classical) theory. Indeed the way physical processes can be composed sequentially or in parallel is the same for all these theories. One of the main objectives of the framework is to classify these theories with respect to the possibility or impossibility of performing specific information processing tasks, e.g., the possibility of making simultaneous statements about properties of physical systems and studying logical dependencies among such features.

The construction of an \ac{OPT} is based on three fundamental concepts: systems, instruments and transformations, and probabilities.

\textdef{Systems}, or better \emph{system types}, symbolically denoted by capital Roman letters \system{A}, \system{B}, $\ldots$, are an abstract notion that is used to keep track of causal connections between  physical events. One example of a system type is a mode of the electromagnetic field, or a mode of the Dirac field, or, in non-relativistic quantum mechanics, the electron spin, or a system of two electrons.  
\textdef{Instruments}, $\test{T}{X} \equiv \eventTest{T}{x}{X} \in \Instr{A}{B}$, instead, model the experiments or processes that can occur to systems. Every instruments has an input system \system{A} and an output system \system{B}, while 
$\outcomeSpace{X}$ is the \textdef{outcome space} associated to the experiment, which is the set of all possible classical outcomes that can be registered as a consequence of running the experiment. Each outcome corresponds to the occurrence of a particular physical event, mathematically modelled through the notion of a \textdef{transformation} $\event{T}{x} \in \Transf{A}{B}$. In a classical setting, an instrument could, for example, model the roll of a die, with every transformation being associated to the die ending up on a particular face---the corresponding number representing the outcome of the experiment.

As previously mentioned, the characterising aspect of \acp{OPT} is their compositional structure, which aims to reproduce that of quantum theory. In particular, we require the collection of instruments, and consequently transformations, to be closed under sequential and parallel composition. Formally, this means that for every pair of instruments $\eventTest{T}{x}{X} \in \Instr{A}{B}$, $\eventTest{G}{y}{Y} \in \Instr{B}{C}$ where the output system of the former coincides with the input system of the latter, there exists an instrument $\sequentialEventTest{T}{x}{X}{G}{y}{Y} \in \Instr{A}{C}$ in the \ac{OPT} modelling experiments subsequently performed on a physical system. Analogously, in the case of parallel composition, for every pair of instruments $\eventTest{T}{x}{X} \in \Instr{A}{B}$, $\eventTest{G}{y}{Y} \in \Instr{C}{D}$ there exists $\parallelEventTest{T}{x}{X}{G}{y}{Y} \in \Instr{AC}{BD}$ describing the situation in which two independent instruments are independently performed on the corresponding subsystems.

An added value of the framework is that it allows for a graphical representation of complex networks of processes. Instruments and transformations are represented as wired boxes
\begin{align*}
	\myQcircuit{
		&\s{A}\qw&\gate{\eventTest{T}{x}{X}}&\s{B}\qw&\qw&
	}, \quad \myQcircuit{
	&\s{A}\qw&\gate{\event{T}{x}}&\s{B}\qw&\qw&
	},
\end{align*}
where the input-output direction is assumed to be from the left to the right by convention. Sequential and parallel composition can then be represented as
\begin{align*}
	\myQcircuit{
			&\s{A}\qw&\gate{\eventNoDown{T}}&\s{B}\qw&\gate{\eventNoDown{G}}&\s{C}\qw&\qw&
		} = \quad\!\! \myQcircuit{
			&\s{A}\qw&\gate{\sequentialComp{\eventNoDown{G}}{\eventNoDown{T}}}&\s{C}\qw&\qw&
		}
\end{align*}
and
\begin{align*}
	\myQcircuit{
			&\s{A}\qw&\gate{\eventNoDown{T}}&\s{B}\qw&\qw&
			\\
			&\s{C}\qw&\gate{\eventNoDown{G}}&\s{D}\qw&\qw&
		} = \quad\!\! \myQcircuit{
			&\s{AC}\qw&\gate{\parallelComp{\eventNoDown{T}}{\eventNoDown{G}}}&\s{BD}\qw&\qw&
		},
\end{align*}
respectively.

A particular type of system that is assumed to be always present in any \ac{OPT} is the trivial system, denoted \trivialSystem, which is used to describe physical processes in which there are no input and/or output systems. The trivial system acts as the identity for parallel composition of systems, meaning that $\trivialSystem\system{A} = \system{A}\trivialSystem = \system{A}$ for any system in the \ac{OPT}.
Physical processes where the input system is trivial are preparations,
while those where the output system is trivial are measurements, which are processes that just produce a classical output but no output system is used in subsequent instruments. Preparation-instruments will be denoted by Greek letters, $\preparationEventTest{\rho}{x}{X} \in \Prep{A}$, while observation-instruments with lowercase Roman letters, $\observationEventTest{a}{x}{X} \in \Obs{A}$. The transformations composing a preparation-instrument are called \textdef{states} and their set is denoted by $\St{A}$. The transformations composing observation-instruments, instead, are called \textdef{effects} and their set is represented as $\Eff{A}$. Diagrammatically, the trivial system will not be drawn. Preparation- and observation-instruments will then be represented as
\begin{align*}
	\myQcircuit{
		&\prepareC{\preparationEventTest{\rho}{x}{X}}&\s{A}\qw&\qw&
	},
\end{align*}
and
\begin{align*}
	\myQcircuit{
		&\s{A}\qw&\measureD{\observationEventTest{a}{x}{X}}
	}\:,
\end{align*}
respectively.

We can now discuss the last element characterising \acp{OPT}, the possibility of assigning probabilities to closed circuits. The compositional structure described so only allows to describe any experiment that can be conceived within a given theory, but it does not allow to make predictions that are comparable with real-world data, something that we deem essential to have a well defined notion of a physical theory. The way this is obtained in the framework is through the requirement that any instrument with both input and output system trivial, is a probability distribution. As an example, we have the following identification
\begin{widetext}
	\begin{align*}
		\myQcircuit{
			&\prepareC{\preparationEventTest{\rho}{x}{X}}&\s{A}\qw&\gate{\eventTest{G}{y}{Y}}&\s{B}\qw&\measureD{\observationEventTest{a}{z}{Z}}&
		} = \probabilityP\left(  \outcome{x,y,z} \vert \preparationEventTest{\rho}{x}{X}, \eventTest{G}{y}{Y}, \observationEventTest{a}{z}{Z} \right),
	\end{align*}
\end{widetext}
i.e.~the circuit on the left has to be interpreted as the probability of obtaining the outcomes \outcome{x,y,z} given that the experiments $\preparationEventTest{\rho}{x}{X} \in \Prep{A}$, $\eventTest{G}{y}{Y} \in \Transf{A}{B}$, and $\observationEventTest{a}{z}{Z} \in \Obs{A}$ have been performed. 

An important aspect of the framework that will be extensively used in our argument is that it is always possible to coarse grain results, grouping outcomes together. For every test $\test{T}{X} \equiv \eventTest{T}{x}{X}$ and every disjoint partition $\left\{\outcomeSpace{Z}_{\outcome{y}}\right\}_{\outcomeIncluded{y}{Y}}$ of the outcome space $\outcomeSpace{X}$ there exists the test $\testNoDown{T}^{'}_{\outcomeSpace{Y}}$ representing the same operation, where the outcome $\outcomeIncluded{y}{Y}$ stands for ``the outcome of the test $\test{T}{X}$ belongs to $\outcomeSpace{Z}_{\outcome{y}}$''. The transformation $\eventNoDown{T}'_{\outcome{y}} = \sum_{\outcome{x} \in \outcomeSpace{Z}_{\outcome{y}}} \event{T}{x}$ is called \textdef{coarse-grained transformation}. 
Obviously, given a test \test{T}{X} the full coarse-graining $\eventCG{T}{X} = \sum_{\outcome{x} \in \outcomeSpace{X}} \event{T}{x}$ is \textdef{deterministic}. We call \emph{deterministic} a transformation associated with a test whose outcome space has just one element. A deterministic transformation does not provide information (the associated test has a unique outcome, occurring with certainty), and can represent e.g.~the evolution of an open system. In quantum theory, a deterministic transformation is called a \emph{quantum channel}.

As already mentioned in the main text, related to the notion of coarse graining, there exists a particular class of transformations, called \textdef{atomic}, that cannot be written as coarse graining of any other transformation different from themselves.
\begin{definition}[Atomic transformation]
	\label{def:opt:transf:atomic}
	A transformation $\eventNoDown{T} \in \Transf{A}{B}$ is \textdef{atomic} if, given $\eventNoDown{T}_{1}$, $\eventNoDown{T}_{2} \in \Transf{A}{B}$, one has the following implication:
	\begin{equation*}
		\eventNoDown{T} = \eventNoDown{T}_{1} + \eventNoDown{T}_{2} \implies \eventNoDown{T}_{1}, \eventNoDown{T}_{2} \propto\eventNoDown{T}.
	\end{equation*}
\end{definition}
Consequently, we also define \textdef{atomic instruments}.
\begin{definition}[Atomic instrument]\label{def:opt:test:atomic}
	An instrument $\test{T}{X} \equiv \eventTest{T}{x}{X}$ is called an \textdef{atomic instrument} if its transformations $\event{T}{x}$ are atomic for all $\outcomeIncluded{x}{X}$.
\end{definition}

We have covered al the fundamental structure of \acp{OPT} necessary to carry out our argument. However, we remark that the framework if more complex that the one presented here and for the interested reader we refer to Refs.~\cite{darianoQuantumTheoryFirst2016,darianoClassicalityLocalDiscriminability2020,perinottiCellularAutomataOperational2020,rolinoMinimalOperationalTheories2025}.

\subsection{Useful definitions}
Now that we have set up the basic structure of the framework of \acp{OPT} we can discuss a series of properties that can be satisfied by a theory and that are of interest for the present paper.

\subsubsection{Causality}
The first notion that we would like to introduce is that of \textdef{causality}. 
\begin{definition}[Causal \acp{OPT}]
	A \textdef{causal \ac{OPT}} \OPTMath{} is a theory where every system $\system{A} \in \Sys{\OPTMath}$ admits a unique deterministic effect~\cite{darianoQuantumTheoryFirst2016}. In symbols, $\forall \system{A} \in \Sys{\OPTMath}$ one has $\EffN{A}=\{\observationUniqueDeterministic\}$, where $\EffN{A}$ indicates the set of deterministic effects of system \system{A}.
\end{definition}
The reason why this property is referred to as causality is that it can be proven to be equal to the property of \emph{no-signalling from the future}, that is that the probability distributions of preparation-instruments does not depend on the choice of the observation-instruments at their output~\cite{darianoQuantumTheoryFirst2016}. Formally,
\begin{align*}
	\sum_{\outcomeIncluded{y}{Y}} \probabilityP\left( \outcome{x,y} \vert \preparationEventTest{\rho}{x}{X}, \observationEventTest{a}{y}{Y} \right) &= \probabilityP\left( \outcome{x} \vert \preparationEventTest{\rho}{x}{X}, \observationEventTest{a}{y}{Y} \right) \\ &= \probabilityP\left( \outcome{x} \vert \preparationEventTest{\rho}{x}{X} \right).
\end{align*}

Consequence of this is that in causal theories the flow of information is fixed from input to output (diagrammatically, from left to right).

\subsubsection{Conditional instruments}
A particular operation that can be encoded in the structure of an \ac{OPT} is the possibility of conditioning which operation to perform based on the result of a previous experiment. 

\begin{definition}[Conditional instruments]
	Let \OPTMath{} be an \ac{OPT}, $\test{T}{X} = \eventTest{T}{x}{X} \in \Instr{A}{B}$ a test of the theory, and $\left\{ \conditionedTest{G}{Y}{x} = \conditionedEventTest{G}{y}{Y}{x} \right\}_{\outcomeIncluded{x}{X}} \subset \Instr{B}{C}$ a labelled collection of instruments. In the case where the collection 
	\begin{equation}
		\label{eqt:OPT:condInstrDef}
		\{\event{T}{(x,y)}\}\coloneqq\left\{ \sequentialComp{\conditionedEvent{G}{y}{x}}{\event{T}{x}} \right\}_{\outcomeIncluded{(x\times y)}{X\times Y}},
	\end{equation}
	is a valid instrument, we call it a \emph{conditional instrument}.
\end{definition}

This kind of operation is sometimes referred to in the literature as \textdef{classical feedback} or \emph{post-processing}~\cite{leppajarviPostprocessingQuantumInstruments2021}.


\begin{remark}
	\label{remark:condInstr}
	In general the object in~\eqref{eqt:OPT:condInstrDef} may not belong to $\InstrOPT{\OPTMath}$, namely it may model an operation that it is not actually implementable in the theory. An example of a theory where not every conditional 
	instrument is an actual instrument of the theory is \ac{MCT} proposed in Ref.~\cite{erbaMeasurementIncompatibilityStrictly2024}. A theory where all conditioned instruments are allowed is referred to as \emph{strongly causal}~\cite{darianoQuantumTheoryFirst2016,perinottiCellularAutomataOperational2020,rolinoMinimalOperationalTheories2025}.
\end{remark}

\subsection{Compatibility}
In conclusion of our discussion on the framework of \acp{OPT}, we will provide the precise definition of the notion of weak-compatibility. As mentioned in the main text, the notion of weak-compatibility is based on the notion of exclusion between instruments.
\begin{definition}[Does not exclude]
	Let \OPTMath\ be a causal \ac{OPT}, we say that an instrument $\eventTest{T}{x}{X} \in \Instr{A}{B}$ \textdef{does not exclude} another instrument $\eventTest{G}{y}{Y} \in \Instr{A}{C}$ if there exists a test $\eventTest{C}{z}{Z} \in \Instr{A}{BE}$ and a post-processing, i.e.~a family of instruments $\left\{\conditionedEventTest{P}{y}{Y}{z}\right\}_{\outcomeIncluded{z}{Z}} \subset \Instr{BE}{C}$ such that
	\begin{subequations}
		\label{eqt:doesNotExclude}
		\begin{align}
			&\myQcircuit{
			&\s{A}\qw&\gate{\event{A}{x}}&\s {B}\qw&\qw&
		} = \sum_{\outcomeIncludedConditioned{z}{S}{x}}
		\myQcircuit{
			&\s{A}\qw&\multigate{1}{\event{C}{z}}&\qw&\s {B}\qw&\qw&
			\\
			&\pureghost{}&\pureghost{\event{C}{z}}&\s{E}\qw&\measureD{\observationUniqueDeterministic}
		},\\[10pt]
		&\myQcircuit{
			&\s{A}\qw&\gate{\event{B}{y}}&\s{C}\qw&\qw&
		} = \sum_{\outcomeIncluded{z}{Z}}
		\myQcircuit{
			&\s{A}\qw&\multigate{1}{\event{C}{z}}&\s {B}\qw&\multigate{1}{\conditionedEvent{P}{y}{z}}&\s{C}\qw&\qw&
			\\
			&\pureghost{}&\pureghost{\event{C}{z}}&\s{E}\qw&\ghost{\conditionedEvent{P}{y}{z}}&\pureghost{}&
		},
		\end{align}
	\end{subequations}
	where $\left\{\outcomeSpaceConditioned{S}{x}\right\}_{\outcomeIncluded{x}{X}}$ is a suitable partition of \outcomeSpace{X}~\cite{darianoIncompatibilityObservablesChannels2022}. On the other hand, if the above condition fails, we say that the instrument $\eventTest{T}{x}{X} $ \textdef{excludes} $\eventTest{G}{y}{Y} $.
\end{definition}

Consequently, we can define weakly-compatible instruments in the following way.
\begin{definition}[Weakly-compatible instrument]
	Let \OPTMath\ be a causal \ac{OPT}, we say that two instruments $\eventTest{T}{x}{X} \in \Instr{A}{B}$ and $\eventTest{G}{y}{Y} \in \Instr{A}{C}$ are \textdef{weakly-compatible} if one does not exclude the other and viceversa.
\end{definition}

In the particular case of observation-instruments, the compatibility condition takes a simpler form than~\eqref{eqt:doesNotExclude}~\cite{darianoIncompatibilityObservablesChannels2022}.

\begin{definition}[Compatibility of observation-instruments]
	Consider a causal \ac{OPT} \OPTMath{}. Two observation-instruments \observationEventTest{a}{x}{X}, $\observationEventTest{b}{y}{Y} \in \Obs{A}$ are \textdef{compatible} if there exists a third test $\observationEventTest{c}{(x,y)}{\outcomeSpaceDouble{X}{Y}} \in \Obs{A}$ such that~\cite{darianoIncompatibilityObservablesChannels2022}
	\begin{align*}
		&\myQcircuit{
			&\s{A}\qw&\measureD{\observationEvent{a}{x}}& 
		}  = \quad \sum_{\outcomeIncluded{y}{Y}} \myQcircuit{
			&\s{A}\qw&\measureD{\observationEvent{c}{\left(x,y\right)}}&
		} \quad \forall \outcomeIncluded{x}{X}, \\	
		&\myQcircuit{
			&\s{A}\qw&\measureD{\observationEvent{b}{y}}& 
		}  = \quad \sum_{\outcomeIncluded{x}{X}} 
		\myQcircuit{
			&\s{A}\qw&\measureD{\observationEvent{c}{\left(x,y\right)}}&
		}  \quad \forall \outcomeIncluded{y}{Y}.
	\end{align*}
\end{definition}

Besides being a special instance of the more general definition involving arbitrary instruments, this is also the \acp{OPT} analogue of the definition of compatibility between \acp{POVM}~\cite{buschNoInformationDisturbance2009,heinosaariInvitationQuantumIncompatibility2016,heinosaariNofreeinformationPrincipleGeneral2019} in quantum theory.

\section{Complementarity in OPTs}
We now formally define the notion of \textdef{complementarity} for generic theories of information processing.

Let us start by generalising the notions of repeatable instruments and verifier states, introducing also the notion of strong verifiers that occur in quantum theory.
\begin{definition}[Repeatable instruments]\label{def:opt:instr:repeatable}
	Let $\testNoDown{T} = \eventTest{T}{x}{X} \in \InstrA{A}$, we say that $\testNoDown{T}$ is \textdef{repeatable} if 
	\begin{equation*}
		\event{T}{x}\event{T}{x'} = \kronekerDelta{\outcome{x}}{\outcome{x'}} \event{T}{x} \quad \forall \outcome{x},\outcome{x'} \in \outcomeSpace{X}.
	\end{equation*}
\end{definition}

\begin{definition}[Verifier state]\label{def:opt:state:verifier}
	Consider a generic \ac{OPT} and let $\test{T}{X} \in \Instr{A}{B}$ be an instrument of the theory. We say that a state $\preparationEventNoDown{\rho} \in \StN{AE}$, where \system{E} is a generic system of the theory, is a \textdef{verifier state} for the instrument, denoted $\preparationEventNoDown{\rho} \in \verSt{\test{T}{X}}$, if there exist $\outcomeIncluded{x}{X}$ such that
	\begin{equation}\label{eqt:opt:state:verifier}
		\myQcircuit{
			&\multiprepareC{1}{\preparationEventNoDown{\rho}}&\s{A}\qw&\gate{\event{T}{x}}&\s{B}\qw&\multimeasureD{1}{\observationUniqueDeterministic_{k}}&
			\\
			&\pureghost{\preparationEventNoDown{\rho}}&\qw&\s{E}\qw&\qw&\ghost{\observationUniqueDeterministic_{k}}&
		} = 1
	\end{equation}
	for all deterministic effects $\observationUniqueDeterministic_{k} \in \EffN{BE}$.
	
	Given that the outcome \outcomeIncluded{x}{X} such that \eqref{eqt:opt:state:verifier} is unique, we will also say that $\preparationEventNoDown{\rho}$ is a verifier state for the particular transformation $\event{T}{x}$, denoted $\preparationEventNoDown{\rho} \in \verSt{\event{T}{x}}$.
\end{definition}

\begin{remark}\label{rmk:opt:state:verifier:strong}
	It is possible to devise a stronger notion of verifier state. A state $\preparationEventNoDown{\rho} \in \StN{AE}$ is a \textdef{strong verifier} for an instrument $\test{T}{X} \equiv \eventTest{T}{x}{X} \in \Instr{A}{B}$ if there exists \outcomeIncluded{x}{X} such that
	\begin{equation}\label{eqt:opt:state:verifier:strong}
		\myQcircuit{
			&\multiprepareC{1}{\preparationEventNoDown{\rho}}&\s{A}\qw&\gate{\event{T}{x}}&\s{B}\qw&\qw&
			\\
			&\pureghost{\preparationEventNoDown{\rho}}&\qw&\s{E}\qw&\qw&\qw&
		} = \myQcircuit{
			&\multiprepareC{1}{\preparationEventNoDown{\rho}}&\s{B}\qw&\qw&
			\\
			&\pureghost{\preparationEventNoDown{\rho}}&\s{E}\qw&\qw&
		}.
	\end{equation}
	It is immediate to see that if a state satisfies \eqref{eqt:opt:state:verifier:strong}, then it also always satisfies \eqref{eqt:opt:state:verifier}. The converse is in general not true. Consider for example the states $\rket{0}$ and $\frac{1}{2}\left( \rket{0} + \rket{1} \right)$ for the transformation
	\begin{align*}
		\myQcircuit{
			&\s{A}\qw&\measureD{\rbra{0} + \rbra{1}}&\prepareC{\rket{0}}&\s{A}\qw&\qw&
		}.
	\end{align*}
	While both states are verifiers for the latter transformation, only the former one is a strong verifier.
\end{remark}

Exploting those definitions we can define \textdef{elementary properties}.
\begin{definition}[Elementary property]
	Let $\testNoDown{P} \equiv \eventTest{P}{x}{X} \in \InstrA{A}$ be a repeatable atomic instrument such that any transformation composing it admits a verifier state, i.e., $\verSt{\event{P}{x}} \neq \emptyset$ for all \outcomeIncluded{x}{X}. We denote \textdef{elementary property (of system \system{A})} any instrument that is like $\testNoDown{P}$.
\end{definition}

This definition summarises the two main features that we require from a property: that it can be verified under suitable circumstances (existence of verifier states), and that it holds a notion of objectivity---or better, inter-subjectivity---(repeatability).
 
More in general it could be possible to define a \textdef{property} for a certain system \system{A} as any coarse graining of an elementary property.

%
%
%
%

Finally we define what does it mean for elementary properties to be \textdef{complementary} and the degrees of complementary that can arise.

\begin{definition}[Complementary elementary properties]
	Let \testNoDown{P}, \testNoDown{P'} be two elementary properties. They are said to be \textdef{complementary} if there exists $\preparationEventNoDown{\rho} \in \verSt{\testNoDown{P}} \cup \verSt{\testNoDown{P'}}$ such that $\preparationEventNoDown{\rho} \notin \verSt{\testNoDown{P}} \cap \verSt{\testNoDown{P'}}$. In words, two elementary properties are complementary if there exists a verifier state of one that is not a verifier state of the other.
	
	If an \ac{OPT} admits complementary elementary properties, we will say that it has \textdef{complementarity}.
\end{definition}

We would like to highlight two facts related to the definitions in this section. The first one is that the definition would hold also in the case of non-causal \acp{OPT}. We are not making any assumption on the uniqueness of the deterministic effect. The second one is that we are considering also the presence of an ancillary system \system{E}. This is because we want our definition to take into account the possibility of \acp{OPT} that do not satisfy the property of local discriminability~\cite{darianoHowDeriveHilbertSpace2006,hardyLimitedHolismRealVectorSpace2012,darianoQuantumTheoryFirst2016,darianoClassicalityLocalDiscriminability2020,centenoTwirledWorldsSymmetryinduced2025}. This means that to characterise the state of a multipartite system it is not sufficient to make local measurements, as it is the case in quantum and classical theory. If multiple agents are sharing a system, as for example in a Bell-like scenario where Alice and Bob share an entangled pair, then in order to characterise the system it is not sufficient to perform only measurements in the two separated laboratories, but an overall measurement of the composite system would be needed.

Different degrees of complementarity between properties can be specified, depending on how much the definition of one property undermines the knowledge of the other. Entropy provides a natural way to capture this loss of information.

\begin{definition}[Degrees of complementarity]\label{def:opt:compl:degree}
	Let $\testNoDown{P}$ and $\testNoDown{P'} \equiv \setNotEnsemble{\event{P}{x}'}_{\outcomeIncluded{x}{X}} \in \InstrA{A}$ be two complementary elementary properties, and let $\preparationEventNoDown{\nu} \in \verSt{\testNoDown{P}}$ be a verifier state of $\testNoDown{P}$. For every $\observationUniqueDeterministic_{k} \in \EffN{AE}$, define the conditional probabilities
	\begin{align*}
		\myQcircuit{
			&\multiprepareC{1}{\preparationEventNoDown{\nu}}&\s{A}\qw&\gate{\event{P}{x}'}&\s{A}\qw&\multimeasureD{1}{\observationUniqueDeterministic_{k}}&
			\\
			&\pureghost{\preparationEventNoDown{\nu}}&\qw&\s{E}\qw&\qw&\ghost{\observationUniqueDeterministic_{k}}&
		} 
		\;\mathDef\; 
		\probabilityEventNoDown{p}_{\nu}\!\left( \outcome{x} \,\middle|\, \observationUniqueDeterministic_{k} \right),
	\end{align*}
	and the corresponding Shannon entropy
	\begin{align}\label{eqt:compl:deg:shannon}
		\entropy[\nu]{\observationUniqueDeterministic_{k}} 
		\;\mathDef\; - \sum_{\outcomeIncluded{x}{X}}  
		\probabilityEventNoDown{p}_{\nu}\!\left( \outcome{x} \,\middle|\, \observationUniqueDeterministic_{k} \right) 
		\log\!\left( \probabilityEventNoDown{p}_{\nu}\!\left( \outcome{x} \,\middle|\, \observationUniqueDeterministic_{k} \right)\right).
	\end{align}
	
	We say that $\testNoDown{P}$ and $\testNoDown{P'}$ are complementary to the following degrees:
	\begin{enumerate}[I)]
		\item \textbf{Strong:}\index{Complementarity!Strong}
		\begin{align*}
			\min_{\observationUniqueDeterministic_{k} \in \EffN{AE}} \entropy[\nu]{\observationUniqueDeterministic_{k}} = \log\!\left( \cardinality{\outcomeSpace{X}} \right).
		\end{align*}
		In words, the second property is completely undermined: every outcome occurs with equal probability, independently of the effect chosen.
		
		\item \textbf{Mild:}\index{Complementarity!Mild}
		\begin{align*}
			0<\max_{\observationUniqueDeterministic_{k} \in \EffN{AE}} \entropy[\nu]{\observationUniqueDeterministic_{k}} \leq \log\!\left( \cardinality{\outcomeSpace{X}} \right),
		\end{align*}
		with $\probabilityEventNoDown{p}_{\nu}\!\left( \outcome{x} \,\middle|\, \observationUniqueDeterministic_{k} \right) \in (0,1)$ for all $\outcomeIncluded{x}{X}$ and for all $\observationUniqueDeterministic_{k}$.  
		Here one has some limited information, but every outcome remains possible.
		
		\item \textbf{Weak:}\index{Complementarity!Weak}
		\begin{align*}
			0<\max_{\observationUniqueDeterministic_{k} \in \EffN{AE}} \entropy[\nu]{\observationUniqueDeterministic_{k}} \leq \log\!\left( \cardinality{\outcomeSpace{X}} \right),
		\end{align*}
		with $\probabilityEventNoDown{p}_{\nu}\!\left( \outcome{x} \,\middle|\, \observationUniqueDeterministic_{k} \right) \in [0,1)$ for all $\outcomeIncluded{x}{X}$  and for all $\observationUniqueDeterministic_{k}$.  
		In this case, some outcomes are excluded entirely, hence one has partial knowledge about the second property.
	\end{enumerate}
\end{definition}

We stress that the distinction between mild and weak complementarity lies in whether all outcomes remain possible, or some are strictly forbidden.

Furthermore, using the definition of the Shannon entropy introduced in \autoref{def:opt:compl:degree}, one immediately obtains the following result.

\begin{lemma}
	Let $\testNoDown{P}$ and $\testNoDown{P'} \in \InstrA{A}$ be two elementary properties, and let $\preparationEventNoDown{\nu} \in \verSt{\testNoDown{P}}$. Then $\preparationEventNoDown{\nu}$ is a verifier for $\testNoDown{P'}$ if and only if
	\begin{align*}
		\entropy[\preparationEventNoDown{\nu}]{\observationUniqueDeterministic_{k}} = 0 \quad \forall \observationUniqueDeterministic_{k} \in \EffN{AE},
	\end{align*}
	where the entropy $\entropy[\preparationEventNoDown{\nu}]{\observationUniqueDeterministic_{k}}$ is defined in~\eqref{eqt:compl:deg:shannon}.
\end{lemma}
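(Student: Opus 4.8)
The plan is to reduce the statement to an elementary fact about the family of conditional distributions $p_{\nu}(\outcome{x}\mid\observationUniqueDeterministic_{k})$ introduced in \autoref{def:opt:compl:degree}, and then to use convexity of the set of deterministic effects. First I would check that, for each fixed $\observationUniqueDeterministic_{k}\in\EffN{AE}$, the numbers $p_{\nu}(\outcome{x}\mid\observationUniqueDeterministic_{k})$ genuinely form a probability distribution over $\outcomeSpace{X}$. Each is the probability assigned to a closed circuit, hence lies in $[0,1]$; and summing over $\outcome{x}$ the transformations $\event{P}{x}'$ add up to the channel $\sum_{\outcomeIncluded{x}{X}}\event{P}{x}'$ (since $\testNoDown{P}'$ is an instrument). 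Composing this channel with $\observationUniqueDeterministic_{k}$ yields again a deterministic effect, which on the normalised state $\nu$ returns $1$, so $\sum_{\outcomeIncluded{x}{X}}p_{\nu}(\outcome{x}\mid\observationUniqueDeterministic_{k})=1$. Since each summand $-p\log p$ is non-negative and vanishes exactly when $p\in\{0,1\}$, the entropy $\entropy[\nu]{\observationUniqueDeterministic_{k}}$ is zero if and only if the distribution is a point mass, i.e.\ there is a unique $\outcome{x}^{\ast}(\observationUniqueDeterministic_{k})$ with $p_{\nu}(\outcome{x}^{\ast}\mid\observationUniqueDeterministic_{k})=1$.

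With this reformulation the forward implication is immediate: if $\nu\in\verSt{\testNoDown{P}'}$ then, by \autoref{def:opt:state:verifier}, there is a \emph{single} outcome $\outcome{x}_{0}$ with $p_{\nu}(\outcome{x}_{0}\mid\observationUniqueDeterministic_{k})=1$ for all $\observationUniqueDeterministic_{k}$, so every conditional distribution is the same point mass and $\entropy[\nu]{\observationUniqueDeterministic_{k}}=0$ throughout.

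For the converse I would show that the selected outcome $\outcome{x}^{\ast}(\observationUniqueDeterministic_{k})$ cannot depend on $\observationUniqueDeterministic_{k}$; this independence is precisely the verifier condition and is the only real obstacle. The key point is that $p_{\nu}(\outcome{x}\mid\cdot)$ is affine in the effect, while $\EffN{AE}$ is convex (a convex mixture of two deterministic effects is realised by randomising the two observation procedures and discarding the coin). Suppose $\outcome{x}^{\ast}(\observationUniqueDeterministic_{1})=\outcome{a}\neq\outcome{b}=\outcome{x}^{\ast}(\observationUniqueDeterministic_{2})$ for two deterministic effects; then for $\bar{\observationUniqueDeterministic}=\tfrac{1}{2}(\observationUniqueDeterministic_{1}+\observationUniqueDeterministic_{2})\in\EffN{AE}$ affinity gives $p_{\nu}(\outcome{a}\mid\bar{\observationUniqueDeterministic})=p_{\nu}(\outcome{b}\mid\bar{\observationUniqueDeterministic})=\tfrac{1}{2}$, so the distribution at $\bar{\observationUniqueDeterministic}$ is not a point mass and $\entropy[\nu]{\bar{\observationUniqueDeterministic}}>0$, contradicting the hypothesis. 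Hence $\outcome{x}^{\ast}$ is a constant $\outcome{x}_{0}$, so $p_{\nu}(\outcome{x}_{0}\mid\observationUniqueDeterministic_{k})=1$ for every $\observationUniqueDeterministic_{k}$ and $\nu\in\verSt{\testNoDown{P}'}$.

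I would finally note that the hypothesis $\nu\in\verSt{\testNoDown{P}}$ enters only to guarantee $\nu\in\StN{AE}$, which is what makes the sum rule $\sum_{\outcomeIncluded{x}{X}}p_{\nu}(\outcome{x}\mid\observationUniqueDeterministic_{k})=1$ hold; repeatability and atomicity of $\testNoDown{P}'$ play no role. In a causal theory there is a single deterministic effect, the convexity step is vacuous, and the equivalence follows directly from the reformulation of the first paragraph.
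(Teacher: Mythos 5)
The paper offers no argument for this lemma---it is asserted to follow ``immediately'' from the definitions---so your proposal must be judged on its own merits. Its first half is sound and holds in any OPT: since $\sum_{\outcomeIncluded{x}{X}}\event{P}{x}'$ is a deterministic transformation, its composition with any $\observationUniqueDeterministic_{k}\in\EffN{AE}$ is again a deterministic effect, and a closed circuit of deterministic events evaluates to $1$ on the deterministic state $\preparationEventNoDown{\nu}\in\StN{AE}$; hence each conditional distribution is normalised, zero entropy is equivalent to a point mass, and the forward implication follows at once from the quantifier structure ($\exists\,\outcome{x}_{0}\;\forall\,\observationUniqueDeterministic_{k}$) of \autoref{def:opt:state:verifier}. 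Your observations that repeatability and atomicity of $\testNoDown{P'}$ play no role, and that $\preparationEventNoDown{\nu}\in\verSt{\testNoDown{P}}$ enters only through $\preparationEventNoDown{\nu}\in\StN{AE}$, are also correct.

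The gap is in the converse, precisely at the convexity step. The set $\EffN{AE}$ is not convex in the framework of this paper: your justification---realising $\bar{\uniDetEff}=\tfrac{1}{2}(\observationUniqueDeterministic_{1}+\observationUniqueDeterministic_{2})$ by ``randomising the two observation procedures and discarding the coin''---presupposes both a fair-coin preparation and the ability to condition which observation-instrument is performed on the coin's outcome, i.e.\ exactly the classical control whose possible failure is the point of \autoref{remark:condInstr} (MCT is such a theory, and the minimal theories cited there are not convex). Consequently $\bar{\uniDetEff}$ need not belong to $\EffN{AE}$, the hypothesis of the lemma says nothing about it, and nothing in the bare OPT axioms then forbids the point-mass location $\outcome{x}^{\ast}(\observationUniqueDeterministic_{k})$ from varying with $\observationUniqueDeterministic_{k}$ when several deterministic effects exist. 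Your proof therefore establishes the lemma for convex theories and, as you note, for causal ones (where $\EffN{AE}$ is a singleton and the issue is vacuous---the only case actually used in the rest of the paper); in the stated generality, which the authors explicitly intend to cover non-causal theories with many deterministic effects, the converse remains unproven. To be fair, this gap afflicts the paper's ``immediately obtains'' just as much; the merit of your argument is that it isolates the exact structural assumption---convexity of the deterministic effects, or their uniqueness---under which the equivalence actually follows.
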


The relations characterising the different degrees of complementarity presented in \cref{def:opt:compl:degree} take a particularly simple form in the case of causal \acp{OPT}, where the deterministic effect is unique.

\begin{lemma}[Degrees of complementarity, causal case]
	Consider a causal \ac{OPT} and let $\testNoDown{P}$ and $\testNoDown{P'} \equiv \setNotEnsemble{\event{P}{x}'}_{\outcomeIncluded{x}{X}} \in \InstrA{A}$ be two complementary elementary properties.  Let $\preparationEventNoDown{\nu} \in \verSt{\testNoDown{P}}$, with $\preparationEventNoDown{\nu} \centernot{\in} \verSt{\testNoDown{P'}}$. Then, with respect to the unique deterministic effect $\observationUniqueDeterministic \in \EffN{AE}$, the state $\preparationEventNoDown{\nu}$ generates complementarity between the two properties in one of the following three degrees:
	\begin{enumerate}[I)]
		\item \textbf{Strong:}\index{Complementarity!Strong}
		\begin{align*}
			\myQcircuit{
				&\multiprepareC{1}{\preparationEventNoDown{\nu}}&\s{A}\qw&\gate{\event{P}{x}'}&\s{A}\qw&\multimeasureD{1}{\observationUniqueDeterministic}&
				\\
				&\pureghost{\preparationEventNoDown{\nu}}&\qw&\s{E}\qw&\qw&\ghost{\observationUniqueDeterministic}&
			}= \frac{1}{\cardinality{\outcomeSpace{X}}} \quad \forall \outcomeIncluded{x}{X},
		\end{align*}
		
		\item \textbf{Mild:}\index{Complementarity!Mild}
		\begin{align*}
			\myQcircuit{
				&\multiprepareC{1}{\preparationEventNoDown{\nu}}&\s{A}\qw&\gate{\event{P}{x}'}&\s{A}\qw&\multimeasureD{1}{\observationUniqueDeterministic}&
				\\
				&\pureghost{\preparationEventNoDown{\nu}}&\qw&\s{E}\qw&\qw&\ghost{\observationUniqueDeterministic}&
			} \in (0,1) \quad \forall \outcomeIncluded{x}{X},
		\end{align*}
		
		\item \textbf{Weak:}\index{Complementarity!Weak}
		\begin{align*}
			\myQcircuit{
				&\multiprepareC{1}{\preparationEventNoDown{\nu}}&\s{A}\qw&\gate{\event{P}{x}'}&\s{A}\qw&\multimeasureD{1}{\observationUniqueDeterministic}&
				\\
				&\pureghost{\preparationEventNoDown{\nu}}&\qw&\s{E}\qw&\qw&\ghost{\observationUniqueDeterministic}&
			} \in [0,1) \quad \forall \outcomeIncluded{x}{X}.
		\end{align*}
	\end{enumerate}
\end{lemma}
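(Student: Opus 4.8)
The plan is to show that, in a causal \ac{OPT}, each of the three entropy conditions of \cref{def:opt:compl:degree} collapses to the stated pointwise condition on the single conditional distribution induced by the unique deterministic effect.

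First I would invoke causality: by definition every system admits a unique deterministic effect, so in particular $\EffN{AE} = \{\observationUniqueDeterministic\}$ is a singleton. Hence both the minimum and the maximum over $\observationUniqueDeterministic_{k} \in \EffN{AE}$ in \cref{def:opt:compl:degree} reduce to evaluation at $\observationUniqueDeterministic$, and the only surviving datum is the family $p_{x}$ of circuit values displayed in the statement (equivalently, the $\probabilityEventNoDown{p}_{\nu}(\outcome{x}\mid\observationUniqueDeterministic)$ of \cref{def:opt:compl:degree}). I would then verify that $\{p_{x}\}_{\outcomeIncluded{x}{X}}$ is a genuine probability distribution. Nonnegativity is immediate; for normalisation, the full coarse-graining $\sum_{\outcomeIncluded{x}{X}} \event{P}{x}'$ is the deterministic channel of $\testNoDown{P'}$, and in a causal theory the unique deterministic effect is preserved by any deterministic channel, so $\sum_{x} p_{x}$ equals the deterministic effect evaluated on $\preparationEventNoDown{\nu} \in \StN{AE}$, which is $1$. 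Finally, the hypothesis $\preparationEventNoDown{\nu} \notin \verSt{\testNoDown{P'}}$ says that no outcome is certain, whence $p_{x} < 1$ for every $\outcomeIncluded{x}{X}$ and, by normalisation, $\cardinality{\outcomeSpace{X}} \ge 2$.

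With this reduction the three cases follow from elementary facts about $\entropy[\nu]{\observationUniqueDeterministic} = -\sum_{x} p_{x} \log p_{x}$. For the strong case I would use the maximum-entropy characterisation (Gibbs' inequality): $\entropy[\nu]{\observationUniqueDeterministic} = \log(\cardinality{\outcomeSpace{X}})$ holds precisely when the distribution is uniform, i.e.\ $p_{x} = 1/\cardinality{\outcomeSpace{X}}$ for all $x$, which is exactly the displayed strong condition. For the mild and weak cases the entropy bounds $0 < \entropy[\nu]{\observationUniqueDeterministic} \le \log(\cardinality{\outcomeSpace{X}})$ are automatic: the upper bound is the same extremal fact, while the lower bound holds because a normalised distribution with every $p_{x} < 1$ cannot be a point mass and therefore has strictly positive entropy. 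Thus the mild and weak degrees are carried entirely by the sign requirements, collapsing respectively to $p_{x} \in (0,1)$ and $p_{x} \in [0,1)$ for all $x$.

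To finish I would note that the three classes are nested (strong $\subset$ mild $\subset$ weak), so any $\preparationEventNoDown{\nu} \in \verSt{\testNoDown{P}}$ with $\preparationEventNoDown{\nu} \notin \verSt{\testNoDown{P'}}$ lands in at least the weak class, which is the content of the lemma. The only step demanding care is the normalisation $\sum_{x} p_{x} = 1$, resting on $\sum_{x} \event{P}{x}'$ being deterministic and on the defining property of $\StN{AE}$; everything else is a direct consequence of the uniqueness of the deterministic effect together with standard entropy extremals. I would also remark that only the uniqueness---and not any factorisation---of the deterministic effect on $\system{AE}$ is used, so the argument is insensitive to whether local discriminability holds.
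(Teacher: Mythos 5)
Your proof is correct and takes essentially the same route the paper intends: the paper states this lemma without any explicit proof, treating it as the immediate specialisation of the entropy-based definition of degrees of complementarity to a causal theory, where the unique deterministic effect collapses the min/max over effects and the entropy conditions reduce to the displayed pointwise probability conditions. Your write-up simply supplies the details the paper leaves implicit (normalisation of $\{p_x\}$ via uniqueness of the deterministic effect, the maximum-entropy characterisation for the strong case, and positivity of entropy for non-point-mass distributions), all of which are sound.
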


This latter definition of the different degrees of complementarity is the one presented in the main text.

\section{Relation between complementarity and incompatibility}
We now prove the main result of our work, namely that for generic causal \acp{OPT} complementarity implies incompatibility. This follows as a corollary of the next theorem:

\begin{theorem}\label{thm:opt:verSt:primaryInclusionProperty}
	Let \OPTMath{} be a causal \ac{OPT}. Consider two instruments $\test{T}{X} \equiv \eventTest{T}{x}{X}$, $\test{G}{Y} \equiv \eventTest{G}{y}{Y} \in \Instr{A}{A}$ such that the former is repeatable and the latter is both atomic and $\event{G}{y} \neq \nullTransformation{A}{A}$ for all \outcomeIncluded{y}{Y}, where $\nullTransformation{A}{A}$ denotes the null transformation for system $\system{A}$. If $\test{T}{X}$ does not exclude $\test{G}{Y}$ $\left(\doNotExclude{\test{T}{X}}{\test{G}{Y}}\right)$, then the set of verifier states of $\test{G}{Y}$ is included within that of $\test{T}{X}$, $\verSt{\test{G}{Y}} \subseteq \verSt{\test{T}{X}}$.
\end{theorem}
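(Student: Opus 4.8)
The plan is to reduce the inclusion to a single inequality between deterministic-effect probabilities and then to carry out the delicate combinatorial step forced by repeatability. Fix $\rho \in \verSt{\test{G}{Y}}$, say with carrying ancilla $\system{F}$ so that $\rho \in \StN{AF}$. By \autoref{def:opt:state:verifier} there is a unique outcome $y_0$ for which the unique deterministic effect $\observationUniqueDeterministic$ returns probability $1$ after $\event{G}{y_0}$ and $0$ after every $\event{G}{y}$ with $y \neq y_0$. I unfold the hypothesis $\doNotExclude{\test{T}{X}}{\test{G}{Y}}$ through the realization $\eventTest{C}{z}{Z}$ and post-processing $\left\{\conditionedEventTest{P}{y}{Y}{z}\right\}$ of \eqref{eqt:doesNotExclude}, writing $\event{D}{y,x} \coloneqq \sum_{\outcomeIncludedConditioned{z}{S}{x}} \sequentialComp{\conditionedEvent{P}{y}{z}}{\event{C}{z}}$ for the block-$x$ contribution to $\event{G}{y} = \sum_{x} \event{D}{y,x}$, where $\left\{\outcomeSpaceConditioned{S}{x}\right\}$ is the partition of $\outcomeSpace{Z}$ provided by the first relation. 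The goal then reads: exhibit $x_0$ with $\observationUniqueDeterministic\,(\parallelComp{\event{T}{x_0}}{\identityTest{F}})\rho = 1$.

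Two ingredients are straightforward and I would prove them first. The operational inequality $\observationUniqueDeterministic\,(\parallelComp{\event{D}{y,x}}{\identityTest{F}})\rho \le \observationUniqueDeterministic\,(\parallelComp{\event{T}{x}}{\identityTest{F}})\rho$ rests only on causality: since $\left\{\conditionedEvent{P}{y}{z}\right\}_{y}$ is an instrument on $\system{BE}$, one has $\observationUniqueDeterministic \circ \conditionedEvent{P}{y}{z} \le \observationUniqueDeterministic$ in the effect ordering, so that discarding $\system{E}$ dominates any post-processing of it; summing over $\outcomeIncludedConditioned{z}{S}{x}$ and invoking the first relation of \eqref{eqt:doesNotExclude}, which identifies $\sum_{\outcomeIncludedConditioned{z}{S}{x}}$ of the $\system{E}$-discarded $\event{C}{z}$ with $\event{T}{x}$, yields the bound. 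Next, because $\test{G}{Y}$ is atomic (\autoref{def:opt:transf:atomic}) and $\event{G}{y} = \sum_{x} \event{D}{y,x}$ is a decomposition into genuine transformations, atomicity forces $\event{D}{y,x} = \mu^{(y)}_{x}\,\event{G}{y}$ with $\mu^{(y)}_{x} \ge 0$ and $\sum_{x}\mu^{(y)}_{x} = 1$.

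The heart of the argument---and the step I expect to be the main obstacle---is to show that for the distinguished outcome $y_0$ the weight concentrates on a single block, $\mu^{(y_0)}_{x} = \kronekerDelta{x}{x_0}$ for some $x_0$. This is where repeatability of $\test{T}{X}$ (\autoref{def:opt:instr:repeatable}) must enter. The mechanism I would pursue is to expand $\event{T}{x}\event{T}{x'} = \kronekerDelta{x}{x'}\event{T}{x}$ in terms of the $\system{E}$-discarded refinements $\widetilde{\event{C}{z}}$ of the realization and to use that the cone of transformations is salient (a sum of transformations vanishes only if each summand does); this yields that refinements belonging to distinct blocks annihilate one another, whence every state post-selected from $\event{C}{z}$ with $z$ in block $x$ is a verifier of $\event{T}{x}$ and the blocks are perfectly distinguishable. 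Combining this block-orthogonality with the atomicity relation $\event{D}{y_0,x} = \mu^{(y_0)}_{x}\,\event{G}{y_0}$---an \emph{indivisible} operation whose verifier $\rho$ certifies a single branch cannot be split across perfectly distinguishable branches---should force $\mu^{(y_0)}$ onto one $x_0$. I anticipate the subtle point to be making this last implication fully rigorous in a general causal \ac{OPT}, where one may additionally need faithfulness of ancilla discarding (automatic in quantum theory, where $\mathrm{Tr}_{\system{E}}\circ\mathcal{E}=0$ already implies $\mathcal{E}=0$).

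Once uniqueness is secured the conclusion is immediate. For the verifier $\rho$ of $\event{G}{y_0}$ we have $\event{G}{y_0} = \event{D}{y_0,x_0} = \sum_{\outcomeIncludedConditioned{z}{S}{x_0}} \sequentialComp{\conditionedEvent{P}{y_0}{z}}{\event{C}{z}}$, so the inequality of the second paragraph applied to the single surviving block gives $1 = \observationUniqueDeterministic\,(\parallelComp{\event{G}{y_0}}{\identityTest{F}})\rho \le \observationUniqueDeterministic\,(\parallelComp{\event{T}{x_0}}{\identityTest{F}})\rho \le 1$, whence $\observationUniqueDeterministic\,(\parallelComp{\event{T}{x_0}}{\identityTest{F}})\rho = 1$ and $\rho \in \verSt{\event{T}{x_0}} \subseteq \verSt{\test{T}{X}}$. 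As $\rho$ was an arbitrary verifier of $\test{G}{Y}$, this establishes $\verSt{\test{G}{Y}} \subseteq \verSt{\test{T}{X}}$, and only causality has been used beyond the two structural hypotheses, so the argument is expected to hold for every causal theory of information processing.
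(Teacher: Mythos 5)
Your proposal reproduces the scaffolding of the paper's proof faithfully: the decomposition $\event{G}{y}=\sum_{\outcomeIncluded{x}{X}}\event{D}{y,x}$ with $\event{D}{y,x}=\sum_{\outcomeIncludedConditioned{z}{S}{x}}\sequentialComp{\conditionedEvent{P}{y}{z}}{\event{C}{z}}$, the block orthogonality of the refinements across distinct blocks (the paper's Eq.~\eqref{eqt:opt:complem:strongIncop:1}, obtained from repeatability plus the fact that a sum of transformations is null only if every summand is), the proportionality $\event{D}{y,x}=\mu^{(y)}_{x}\event{G}{y}$ from atomicity, the causality bound on probabilities, and the final squeeze $1=p_{\rho}\!\left(\event{G}{y_0}\right)\le p_{\rho}\!\left(\event{T}{x_0}\right)\le 1$. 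Your worry about faithfulness of ancilla discarding is also easily dispelled: in a causal \ac{OPT} every effect is dominated by the unique deterministic effect, so a transformation whose $\system{E}$-discarded version is null is itself null, which is exactly what licenses Eq.~\eqref{eqt:opt:complem:strongIncop:1}.

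However, the step you yourself call the heart of the argument---that $\mu^{(y_0)}_{x}=\kronekerDelta{x}{x_0}$ for a single $x_0$---is precisely where your proposal stops being a proof, and the mechanism you sketch cannot deliver it. The branches $\event{D}{y_0,x}=\mu^{(y_0)}_{x}\event{G}{y_0}$ are all mutually proportional (that is exactly what atomicity gives), so they are not ``perfectly distinguishable branches'' in any operational sense; the orthogonality you derived lives at the level of the $\event{C}{z}$'s, not of the $\event{D}{y_0,x}$'s. Moreover, the verifier alone is powerless here: combining your inequality with normalization only yields $\mu^{(y_0)}_{x}=p_{\rho}\!\left(\event{T}{x}\right)$ for every $x$, which is perfectly consistent with the weights being split. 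The missing idea is to compose the decomposition with $\event{T}{x}$ \emph{on the input side}: block orthogonality gives $\sequentialComp{\event{D}{y,x'}}{\event{T}{x}}=\nullTransformation{A}{A}$ for $x'\neq x$ (Eq.~\eqref{eqt:opt:complem:strongIncop:2}), hence $\sequentialComp{\event{G}{y}}{\event{T}{x}}=\sequentialComp{\event{D}{y,x}}{\event{T}{x}}=\lambda_{y,x}\,\sequentialComp{\event{G}{y}}{\event{T}{x}}$ (Eqs.~\eqref{eqt:opt:complem:strongIncop:3}--\eqref{eqt:opt:complem:strongIncop:4}). This scalar constraint, together with $\sum_{x}\lambda_{y,x}=1$ and the hypothesis $\event{G}{y}\neq\nullTransformation{A}{A}$---a hypothesis your proposal never invokes, which is itself a warning sign---is what the paper uses to conclude that exactly one $\lambda_{y,x}$ equals $1$ and the rest vanish; note that in the paper this concentration statement is purely algebraic, holds for every $\outcomeIncluded{y}{Y}$, and is established before, and independently of, any verifier. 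Your instinct that this point is delicate is sound (Eq.~\eqref{eqt:opt:complem:strongIncop:4} by itself only says that for each pair either $\lambda_{y,x}=1$ or $\sequentialComp{\event{G}{y}}{\event{T}{x}}$ is null, so the degenerate alternative must still be excluded), but the route to closing it is the composition identity above, not an appeal to indivisibility against distinguishable branches; without Eqs.~\eqref{eqt:opt:complem:strongIncop:2}--\eqref{eqt:opt:complem:strongIncop:4} your argument cannot be completed.
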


\begin{proof}
	Let us start by explicitly stating the tests that characterise the relationship $\doNotExclude{\test{T}{X}}{\test{G}{Y}}$:
	\begin{equation*}
		\myQcircuit{
			&\s{A}\qw&\gate{\event{T}{x}}&\s{A}\qw&\qw&
		} = \sum_{\outcomeIncludedConditioned{z}{Z}{x}} \myQcircuit{
			&\s{A}\qw&\multigate{1}{\event{C}{z}}&\s{A}\qw&\qw&\qw&
			\\
			&\pureghost{}&\pureghost{\event{C}{z}}&\s{E}\qw&\measureD{\observationUniqueDeterministic}&
		} \quad \forall \outcomeIncluded{x}{X},
	\end{equation*}
	and 
	\begin{equation*}
		\myQcircuit{
			&\s{A}\qw&\gate{\event{G}{y}}&\s{A}\qw&\qw&
		} = \sum_{\outcomeIncluded{z}{Z}} \myQcircuit{
			&\s{A}\qw&\multigate{1}{\event{C}{z}}&\s{A}\qw&\multigate{1}{\conditionedEvent{P}{y}{z}}&\s{A}\qw&\qw&
			\\
			&\pureghost{}&\pureghost{\event{C}{z}}&\s{E}\qw&\ghost{\conditionedEvent{P}{y}{z}}&
		} \quad \forall \outcomeIncluded{y}{Y}.
	\end{equation*}
	Exploiting the repeatability property of \test{T}{X}, one can prove by direct calculation that
	\begin{align}\label{eqt:opt:complem:strongIncop:1}
		\myQcircuit{
			&\s{A}\qw&\multigate{2}{\event{C}{z}}&\s{A}\qw&\multigate{1}{\event{C}{z'}}&\s{A}\qw&\qw&
			\\
			&\pureghost{}&\pureghost{\event{C}{z}}&\pureghost{}&\pureghost{\event{C}{z'}}&\s{E'}\qw&\qw&
			\\
			&\pureghost{}&\pureghost{\event{C}{z}}&\qw&\qw&\s{E}\qw&\qw&
		} = \nullTransformation{A}{A},
	\end{align}
	whenever $\outcomeIncludedConditioned{z}{Z}{x}$, $\outcomeIncludedConditioned{z'}{Z}{x'}$ with $\outcome{x} \neq \outcome{x'}$. In particular, the latter property follows form the fact that $\event{T}{x}\event{T}{x'} = \event{T}{x'}\event{T}{x} = \nullTransformation{A}{A}$ whenever $\outcome{x} \neq \outcome{x'}$.
	
	Let us now then define
	\begin{equation*}
		\myQcircuit{
			&\s{A}\qw&\gate{\event{D}{y,x}}&\s{A}\qw&\qw&
		} \mathDef \sum_{\outcomeIncludedConditioned{z}{Z}{x}} \myQcircuit{
			&\s{A}\qw&\multigate{1}{\event{C}{z}}&\s{A}\qw&\multigate{1}{\conditionedEvent{P}{y}{z}}&\s{A}\qw&\qw&
			\\
			&\pureghost{}&\pureghost{\event{C}{z}}&\s{E}\qw&\ghost{\conditionedEvent{P}{y}{z}}&
		}.
	\end{equation*}
	Exploiting \eqref{eqt:opt:complem:strongIncop:1} it is then immediate to conclude that 
	\begin{align}\label{eqt:opt:complem:strongIncop:2}
		\myQcircuit{
			&\s{A}\qw&\gate{\event{T}{x}}&\s{A}\qw&\gate{\event{D}{y,x'}}&\s{A}\qw&\qw&
		} = \quad\!\! \nullTransformation{A}{A},
	\end{align}
	whenever $\outcome{x}\neq\outcome{x'}$ for all \outcomeIncluded{y}{Y} and then 
	\begin{align}\label{eqt:opt:complem:strongIncop:3}
		\myQcircuit{
			&\s{A}\qw&\gate{\event{T}{x}}&\s{A}\qw&\gate{\event{G}{y}}&\s{A}\qw&\qw&
		} = \quad\!\! \myQcircuit{
			&\s{A}\qw&\gate{\event{T}{x}}&\s{A}\qw&\gate{\event{D}{y,x}}&\s{A}\qw&\qw&
		}
	\end{align}
since $\event{G}{y} = \sum_{\outcomeIncluded{x}{X}} \event{D}{y,x}$,
Due to the fact that $\test{G}{Y}$ is atomic, it must then be
	\begin{align*}
		\myQcircuit{
			&\s{A}\qw&\gate{\event{D}{y,x}}&\s{A}\qw&\qw&
		} = \quad\!\! \lambda_{\outcome{y,x}} \myQcircuit{
			&\s{A}\qw&\gate{\event{G}{y}}&\s{A}\qw&\qw&
		},
	\end{align*}
	for some $\lambda_{\outcome{y,x}} \in [0,1]$. Consequently, thanks to~\eqref{eqt:opt:complem:strongIncop:3} one gets
	\begin{align}\label{eqt:opt:complem:strongIncop:4}
		\myQcircuit{
			&\s{A}\qw&\gate{\event{T}{x}}&\s{A}\qw&\gate{\event{G}{y}}&\s{A}\qw&\qw&
		} = \quad\!\! \lambda_{\outcome{y,x}} \myQcircuit{
			&\s{A}\qw&\gate{\event{T}{x}}&\s{A}\qw&\gate{\event{G}{y}}&\s{A}\qw&\qw&
		}.
	\end{align}
	The latter equation implies that it must be $\lambda_{\outcome{y,x}}\in \left\{0,1\right\}$. Furthermore, for any \outcomeIncluded{y}{Y} there exists an unique \outcomeIncluded{x}{X} such that $\lambda_{\outcome{y,x}} = 1$. To prove both the existence and the uniqueness we will proceed by contradiction starting from the former. Suppose that for some $\outcomeIncluded{y}{Y}$ the constant is always zero, namely $\lambda_{\outcome{y,x}} = 0$ for all \outcomeIncluded{x}{X}, then
	\begin{align*}
		\myQcircuit{
			&\s{A}\qw&\gate{\event{G}{y}}&\s{A}\qw&\qw&
		} &= \sum_{\outcomeIncluded{x}{X}} \myQcircuit{
			&\s{A}\qw&\gate{\event{D}{y,x}}&\s{A}\qw&\qw&
		}\\[10pt]
		&= \sum_{\outcomeIncluded{x}{X}} \lambda_{\outcome{y,x}} \myQcircuit{
			&\s{A}\qw&\gate{\event{G}{y}}&\s{A}\qw&\qw&
		}\\[10pt]
		&= \nullTransformation{A}{A},
	\end{align*}
	which is absurd, since by hypothesis $\event{G}{y} \neq \nullTransformation{A}{A}$ for all \outcomeIncluded{y}{Y}. To prove the uniqueness it is sufficient to observe that if for a given \outcomeIncluded{y}{Y} there exists a set of outcomes $\outcomeIncluded{x}{X'}$ such that $\lambda_{\outcome{y,x}} = 1$ for all $\outcomeIncluded{x}{X'}$, then
	\begin{align*}
		\myQcircuit{
			&\s{A}\qw&\gate{\event{G}{y}}&\s{A}\qw&\qw&
		} &= \sum_{\outcomeIncluded{x}{X}} \myQcircuit{
			&\s{A}\qw&\gate{\event{D}{y,x}}&\s{A}\qw&\qw&
		}\\[10pt]
		&= \sum_{\outcomeIncluded{x}{X}} \lambda_{\outcome{y,x}} \myQcircuit{
			&\s{A}\qw&\gate{\event{G}{y}}&\s{A}\qw&\qw&
		}\\[10pt]
		&= \sum_{\outcomeIncluded{x}{X'}} \lambda_{\outcome{y,x}} \myQcircuit{
			&\s{A}\qw&\gate{\event{G}{y}}&\s{A}\qw&\qw&
		}\\[10pt]
		&= \sum_{\outcomeIncluded{x}{X'}} \myQcircuit{
			&\s{A}\qw&\gate{\event{G}{y}}&\s{A}\qw&\qw&
		}\\[10pt]
		&= \cardinality{\outcomeSpace{X'}} \myQcircuit{
			&\s{A}\qw&\gate{\event{G}{y}}&\s{A}\qw&\qw&
		},
	\end{align*}
	which is possible only if $\cardinality{\outcomeSpace{X}'} = 1$.		
	To conclude the proof let us then consider a verifier state $\preparationEventNoDown{\rho}$ for $\test{G}{Y}$. By definition we know that there exists $\outcomeIncluded{y}{Y}$ such that $\preparationEventNoDown{\rho} \in \verSt{\event{G}{y}}$, or in formula
	\begin{align*}
		\myQcircuit{
			&\multiprepareC{1}{\preparationEventNoDown{\rho}}&\s{A}\qw&\gate{\event{G}{y}}&\s{A}\qw&\multimeasureD{1}{\observationUniqueDeterministic}&
			\\
			&\pureghost{\preparationEventNoDown{\rho}}&\qw&\s{E'}\qw&\qw&\ghost{\observationUniqueDeterministic}&
		} = 1,
	\end{align*}
	where $\system{E'}$ is an appropriate ancillary system. But, the latter relation can be also be rewritten as
	\begin{align*}
		\sum_{\outcomeIncludedConditioned{z}{Z}{x}} \myQcircuit{
			&\multiprepareC{2}{\preparationEventNoDown{\rho}}&\s{A}\qw&\multigate{1}{\event{C}{z}}&\s{A}\qw&\multigate{1}{\conditionedEvent{P}{y}{z}}&\s{A}\qw&\multimeasureD{2}{\observationUniqueDeterministic}&
			\\
			&\pureghost{\preparationEventNoDown{\rho}}&\pureghost{}&\pureghost{\event{C}{z}}&\s{E}\qw&\ghost{\conditionedEvent{P}{y}{z}}&
			\\
			&\pureghost{\preparationEventNoDown{\rho}}&\qw&\qw&\s{E'}\qw&\qw&\qw&\ghost{\observationUniqueDeterministic}&
		} = 1,
	\end{align*}
	where we exploited the fact that for all $\outcomeIncluded{y}{Y}$ there exists an unique $\outcomeIncluded{x}{X}$ such that
	\begin{align*}
		\myQcircuit{
			&\s{A}\qw&\gate{\event{G}{y}}&\s{A}\qw&\qw&
		} = \quad\!\! \myQcircuit{
			&\s{A}\qw&\gate{\event{D}{y,x}}&\s{A}\qw&\qw&
		}.
	\end{align*}
	Now, given that in general $\conditionedEvent{P}{y}{z}$ is not a deterministic transformation, it must also hold that
	\begin{align*}
		&\sum_{\outcomeIncludedConditioned{z}{Z}{x}} \myQcircuit{
			&\multiprepareC{2}{\preparationEventNoDown{\rho}}&\s{A}\qw&\multigate{1}{\event{C}{z}}&\s{A}\qw&\multigate{1}{\conditionedEvent{P}{y}{z}}&\s{A}\qw&\multimeasureD{2}{\observationUniqueDeterministic}&
			\\
			&\pureghost{\preparationEventNoDown{\rho}}&\pureghost{}&\pureghost{\event{C}{z}}&\s{E}\qw&\ghost{\conditionedEvent{P}{y}{z}}&
			\\
			&\pureghost{\preparationEventNoDown{\rho}}&\qw&\qw&\s{E'}\qw&\qw&\qw&\ghost{\observationUniqueDeterministic}&
		}\\[10pt]
		&\leq \sum_{\outcomeIncludedConditioned{z}{Z}{x}} \myQcircuit{
			&\multiprepareC{2}{\preparationEventNoDown{\rho}}&\s{A}\qw&\multigate{1}{\event{C}{z}}&\s{A}\qw&\multimeasureD{2}{\observationUniqueDeterministic}&
			\\
			&\pureghost{\preparationEventNoDown{\rho}}&\pureghost{}&\pureghost{\event{C}{z}}&\s{E}\qw&\ghost{\observationUniqueDeterministic}&
			\\
			&\pureghost{\preparationEventNoDown{\rho}}&\qw&\s{E'}\qw&\qw&\ghost{\observationUniqueDeterministic}&
		}\\[10pt]
		&= \sum_{\outcomeIncludedConditioned{z}{Z}{x}} \myQcircuit{
			&\multiprepareC{2}{\preparationEventNoDown{\rho}}&\s{A}\qw&\multigate{1}{\event{C}{z}}&\s{A}\qw&\measureD{\observationUniqueDeterministic}&
			\\
			&\pureghost{\preparationEventNoDown{\rho}}&\pureghost{}&\pureghost{\event{C}{z}}&\s{E}\qw&\measureD{\observationUniqueDeterministic}&
			\\
			&\pureghost{\preparationEventNoDown{\rho}}&\qw&\s{E'}\qw&\qw&\measureD{\observationUniqueDeterministic}&
		}\\[10pt]
		&= \myQcircuit{
			&\multiprepareC{1}{\preparationEventNoDown{\rho}}&\s{A}\qw&\gate{\event{T}{x}}&\s{A}\qw&\multimeasureD{1}{\observationUniqueDeterministic}&
			\\
			&\pureghost{\preparationEventNoDown{\rho}}&\qw&\s{E'}\qw&\qw&\ghost{\observationUniqueDeterministic}&
		} \leq 1,
	\end{align*}
	where in the last step we used that, in general, $\event{T}{x}$ is not a deterministic transformation. Putting everything together, one obtains that it must be
	\begin{align*}
		\myQcircuit{
			&\multiprepareC{1}{\preparationEventNoDown{\rho}}&\s{A}\qw&\gate{\event{T}{x}}&\s{A}\qw&\multimeasureD{1}{\observationUniqueDeterministic}&
			\\
			&\pureghost{\preparationEventNoDown{\rho}}&\qw&\s{E'}\qw&\qw&\ghost{\observationUniqueDeterministic}&
		} = 1.
	\end{align*}
	Hence, $\preparationEventNoDown{\rho} \in \verSt{\test{T}{X}}$, or more precisely $\preparationEventNoDown{\rho} \in \verSt{\event{T}{x}}$. Given the arbitrariness of $\preparationEventNoDown{\rho} \in \verSt{\test{G}{Y}}$, one obtains that $\verSt{\test{G}{Y}} \subseteq  \verSt{\test{T}{X}}$.
\end{proof}

\begin{corollary}
 	In causal \acp{OPT} complementary elementary properties are strongly incompatible, or, equivalently, weakly compatible elementary properties are non-complementary.
\end{corollary}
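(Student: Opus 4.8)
The plan is to obtain the corollary as a direct double application of \cref{thm:opt:verSt:primaryInclusionProperty}, followed by passage to the contrapositive. First I would unpack the two relevant definitions. Two elementary properties $\testNoDown{P}$ and $\testNoDown{P'}$ are \emph{weakly compatible} exactly when neither excludes the other, i.e.\ $\doNotExclude{\testNoDown{P}}{\testNoDown{P'}}$ and $\doNotExclude{\testNoDown{P'}}{\testNoDown{P}}$ both hold; and they are \emph{complementary} exactly when their verifier sets differ. Since \emph{strong incompatibility} is by definition the negation of weak compatibility, the two phrasings in the statement are logically contrapositive, so it suffices to establish ``weakly compatible $\Rightarrow$ non-complementary''.

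The crucial observation is that a single elementary property simultaneously meets both families of hypotheses of \cref{thm:opt:verSt:primaryInclusionProperty}. By definition an elementary property is a repeatable atomic instrument, so it can serve as the repeatable instrument $\test{T}{X}$ in the theorem; moreover, each of its transformations admits a verifier state, and a transformation equal to $\nullTransformation{A}{A}$ can never be verified with probability one, so every component is nonzero and the property can equally serve as the atomic, nowhere-vanishing instrument $\test{G}{Y}$. This symmetry in the admissible roles is what lets me invoke the theorem in both directions.

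With this in place the argument is immediate. Assuming $\testNoDown{P}$ and $\testNoDown{P'}$ weakly compatible, I would first apply \cref{thm:opt:verSt:primaryInclusionProperty} taking $\test{T}{X}=\testNoDown{P}$ (repeatable) and $\test{G}{Y}=\testNoDown{P'}$ (atomic, nonzero): the relation $\doNotExclude{\testNoDown{P}}{\testNoDown{P'}}$ then gives $\verSt{\testNoDown{P'}} \subseteq \verSt{\testNoDown{P}}$. Exchanging the roles, with $\test{T}{X}=\testNoDown{P'}$ and $\test{G}{Y}=\testNoDown{P}$, the relation $\doNotExclude{\testNoDown{P'}}{\testNoDown{P}}$ yields $\verSt{\testNoDown{P}} \subseteq \verSt{\testNoDown{P'}}$. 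The two inclusions force $\verSt{\testNoDown{P}} = \verSt{\testNoDown{P'}}$, which is precisely non-complementarity. Taking the contrapositive establishes that complementary elementary properties cannot be weakly compatible, i.e.\ they are strongly incompatible.

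The main (and essentially only) obstacle is the bookkeeping on the hypotheses: one must verify that \cref{thm:opt:verSt:primaryInclusionProperty} really is applicable in both directions, which rests entirely on the fact that an elementary property is at once repeatable and atomic, and that the condition $\event{G}{y} \neq \nullTransformation{A}{A}$ is automatic because every component of an elementary property has a nonempty verifier set. Once these checks are made, no further computation is required, since all the diagrammatic work has already been carried out in the proof of the theorem.
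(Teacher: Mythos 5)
Your proposal is correct and follows essentially the same route as the paper's own proof: weak compatibility gives both ``does not exclude'' relations, \autoref{thm:opt:verSt:primaryInclusionProperty} applied in each direction yields the two inclusions and hence $\verSt{\testNoDown{P}} = \verSt{\testNoDown{P'}}$, i.e.\ non-complementarity, and the contrapositive gives the statement. Your explicit bookkeeping---that an elementary property is simultaneously repeatable and atomic, and that the non-nullity hypothesis is automatic since a null transformation cannot admit a verifier---is exactly what the paper's terser proof leaves implicit.
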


\begin{proof}
	Considering two elementary properties $\test{P}{X}$ and $\test{P}{Y}' \in \InstrA{A}$, if one supposes that they are weakly compatible, then \autoref{thm:opt:verSt:primaryInclusionProperty} guarantees that $\verSt{\test{P}{X}} \equiv \verSt{\testNoDown{P}'_{\outcomeSpace{Y}}}$, i.e.~that these two elementary properties are non-complementary.
\end{proof}

\end{document}